\documentclass[runningheads]{llncs}

\usepackage{todonotes}

\pdfoutput=1


\usepackage[protrusion=true,expansion=true]{microtype}

\usepackage{amsmath,amssymb,amsfonts}

\makeatletter
\@ifundefined{theorem}{
  \newtheorem{theorem}{Theorem}[section]
  
  \newtheorem{lemma}[theorem]{Lemma}

}{}
\makeatother


\usepackage{subcaption}
\usepackage{graphicx}
\usepackage[backref=page]{hyperref}
\usepackage{color}
\usepackage{wrapfig}
\usepackage{tikz}
\usetikzlibrary{decorations.pathreplacing}
\usepackage{setspace}
\usepackage{algorithm}
\usepackage[noend]{algpseudocode}
\usepackage[framemethod=tikz]{mdframed}
\usepackage{xspace}
\usepackage{pgfplots}
\pgfplotsset{compat=1.5}
\usepackage{framed}


\newcommand{\namedref}[2]{\hyperref[#2]{#1~\ref*{#2}}}
\newcommand{\thmlab}[1]{\label{thm:#1}}
\newcommand{\thmref}[1]{\namedref{Theorem}{thm:#1}}
\newcommand{\lemlab}[1]{\label{lem:#1}}
\newcommand{\lemref}[1]{\namedref{Lemma}{lem:#1}}

\newcommand{\factlab}[1]{\label{fact:#1}}
\newcommand{\factref}[1]{\namedref{Fact}{fact:#1}}

\newcommand{\alglab}[1]{\label{alg:#1}}
\renewcommand{\algref}[1]{\namedref{Algorithm}{alg:#1}}


\newcommand{\PPr}[1]{\ensuremath{\mathbf{Pr}\left[#1\right]}}

\renewcommand{\O}[1]{\ensuremath{\mathcal{O}\bigl(#1\bigr)}}

\def \balpha  {\ensuremath{\mathbf{\alpha}}}

\def \calE    {\ensuremath{\mathcal{E}}}

\def \calH    {\ensuremath{\mathcal{H}}}

\def \calP    {\ensuremath{\mathcal{P}}}

\def \bA      {\ensuremath{\mathbf{A}}}
\def \bB      {\ensuremath{\mathbf{B}}}
\def \bD      {\ensuremath{\mathbf{D}}}

\def \bb      {\ensuremath{\mathbf{b}}}
\def \bfe     {\ensuremath{\mathbf{e}}}

\def \bu      {\ensuremath{\mathbf{u}}}

\def \bv      {\ensuremath{\mathbf{v}}}
\def \bx      {\ensuremath{\mathbf{x}}}
\def \by      {\ensuremath{\mathbf{y}}}
\def \bz      {\ensuremath{\mathbf{z}}}

\newcommand{\mdef}[1]{{\ensuremath{#1}}\xspace}

\DeclareMathOperator*{\sgn}{sgn}
\DeclareMathOperator*{\rank}{rank}



\newcommand{\ceil}[1]{\mdef{\left\lceil#1\right\rceil}}

\newcommand{\ignore}[1]{}

\newif\ifnotes\notestrue
\ifnotes
\newcommand{\samson}[1]{\textcolor{blue}{{\bf (Samson:} {#1}{\bf ) }}}
\newcommand{\vincent}[1]{\textcolor{purple}{{\bf (Vincent:} {#1}{\bf ) }}}
\else
\newcommand{\samson}[1]{}
\newcommand{\vincent}[1]{}
\fi

\makeatletter
\renewcommand*{\@fnsymbol}[1]{\textcolor{mahogany}{\ensuremath{\ifcase#1\or *\or \dagger\or \ddagger\or
 \mathsection\or \triangledown\or \mathparagraph\or \|\or **\or \dagger\dagger
   \or \ddagger\ddagger \else\@ctrerr\fi}}}
\makeatother

\providecommand{\email}[1]{\href{mailto:#1}{\nolinkurl{#1}\xspace}}

\definecolor{mahogany}{rgb}{0.75, 0.25, 0.0}
\definecolor{darkblue}{rgb}{0.0, 0.0, 0.55}
\definecolor{darkpastelgreen}{rgb}{0.01, 0.75, 0.24}
\definecolor{darkgreen}{rgb}{0.0, 0.2, 0.13}
\definecolor{darkgoldenrod}{rgb}{0.72, 0.53, 0.04}
\definecolor{darkred}{rgb}{0.55, 0.0, 0.0}
\definecolor{forestgreenweb}{rgb}{0.13, 0.55, 0.13}
\definecolor{greencss}{rgb}{0.0, 0.5, 0.0}
\definecolor{bleudefrance}{rgb}{0.19, 0.55, 0.91}

\hypersetup{
     colorlinks   = true,
     citecolor    = mahogany,
     linkcolor    = darkpastelgreen
}

\begin{document}
\title{Efficient Algorithms for Verifying Kruskal Rank in Sparse Linear Regression and Related Applications}
\titlerunning{Efficient Verification of Kruskal Rank}
\author{
    Fengqin Zhou
}
\authorrunning{F. Zhou}

\institute{
    University of Illinois Urbana-Champaign, United States \\
    \email{fengqin3@illinois.edu}
}

\maketitle
\begin{abstract}
We present novel algorithmic techniques to efficiently verify the Kruskal rank of matrices that arise in sparse linear regression, tensor decomposition, and latent variable models. Our unified framework combines randomized hashing techniques with dynamic programming strategies, and is applicable in various settings, including binary fields, general finite fields, and integer matrices. In particular, our algorithms achieve a runtime of 
$\O{dk\cdot(nM)^{\ceil{k/2}}}$
while ensuring high-probability correctness. Our contributions include:
A unified framework for verifying Kruskal rank across different algebraic settings;
Rigorous runtime and high-probability guarantees that nearly match known lower bounds;
Practical implications for identifiability in tensor decompositions and deep learning, particularly for the estimation of noise transition matrices.
\end{abstract}

\section{Introduction}
Ensuring that a given matrix has a sufficiently high Kruskal rank is critical in many areas such as compressed sensing, tensor decomposition, and learning with noisy labels. The Kruskal rank—defined as the maximum number of rows (or columns) such that every subset is linearly independent—is central to guaranteeing the uniqueness of tensor decompositions~\cite{kruskal1976,kruskal1977} and to establishing the identifiability of latent variable models~\cite{allman2009}. Moreover, recent work on learning with noisy labels has highlighted the need to verify that the noise transition matrix satisfies certain rank conditions in order to recover the true underlying labels robustly~\cite{patrini2017,liu2023}.

In this paper, we propose a unified framework for verifying the Kruskal rank of matrices that naturally arise in sparse linear regression, tensor decomposition, and deep learning applications. Our main contributions are as follows:
\begin{enumerate}
  \item We design efficient algorithms—based on randomized hashing and dynamic programming—that work for matrices over binary fields, general finite fields, and integer entries.
  \item We derive rigorous runtime guarantees and high-probability bounds that match known lower bounds up to polynomial factors.
  \item We demonstrate how our verification methods can be used to validate the identifiability conditions required by deep learning models (e.g., for estimating instance-dependent noise transition matrices).
\end{enumerate}
These contributions not only advance the theoretical understanding of linear dependence verification but also provide practical diagnostic tools for applications in signal processing and machine learning.

\section{Related Work}
The verification of linear dependence has been extensively studied in the context of tensor decompositions and latent variable models. Kruskal’s seminal results~\cite{kruskal1976,kruskal1977} laid the groundwork for establishing uniqueness conditions in trilinear decompositions, and these conditions have been applied to a variety of latent structure models~\cite{allman2009}.

In the area of compressed sensing, guarantees for signal recovery depend critically on properties such as matrix rank and linear independence~\cite{donoho2006,candes2006}. In addition, research on low‑rank tensor approximation has been furthered by Comon and Lim~\cite{comonlim2011}, who explored sparse representations and their applications in higher‑order tensor decompositions. Model-based compressive sensing methods~\cite{baraniukEtAl2010} similarly exploit structured constraints to improve reconstruction performance.

More recently, methods for learning with noisy labels have focused on estimating noise transition matrices under identifiability conditions~\cite{patrini2017,liu2023}. Complementary algorithmic strategies based on fast Fourier transforms have been proposed to solve linear dependence problems~\cite{bhattacharyya2018}, while fine‑grained complexity studies have shown inherent runtime lower bounds for sparse linear regression~\cite{gupte2020}. Our work builds on these foundations by offering an algorithmic framework that unifies these diverse settings under a single approach.

\section{Preliminaries}
In this section, we review key definitions and classical results that form the foundation of our approach.

\subsection{Kruskal Rank}
The \emph{Kruskal rank} of a matrix \(M\) is defined as the largest integer \(r\) such that every set of \(r\) rows (or columns) is linearly independent~\cite{kruskal1976,kruskal1977}. This concept is essential for ensuring the uniqueness of tensor decompositions and is used to establish identifiability conditions in latent variable models~\cite{allman2009}.

\subsection{Cramer’s Rule}
\thmlab{thm:cramer}
Suppose \(\bA\in\mathbb{R}^{n\times n}\) such that \(\det(\bA)\neq 0\). Suppose \(\bb\in\mathbb{R}^n\) and \(\bA\bx=\bb\). Then for all \(i\in[n]\), we have \(x_i=\frac{\det(\bB_i)}{\det(\bA)}\), where \(\bB_i\) is the matrix formed by replacing the \(i\)-th column of \(\bA\) with the column vector \(\bb\).

\subsection{Leibniz Formula for Determinants}
\thmlab{thm:leibniz}
Let \(\bA\in\mathbb{R}^{n\times n}\). Then 
\[
\det(\bA)=\sum_{\sigma\in S_n}\sgn(\sigma)\prod_{i=1}^n A_{\sigma(i),i},
\]
where \(S_n\) is the set of all permutations on \(n\) and \(\sgn\) is the sign function of permutations, which returns \(+1\) for even permutations and \(-1\) for odd permutations.

\section{Algorithms}
\subsection{Kruskal Rank in Binary Fields}
In the binary case, we develop a randomized algorithm that iterates over sparse subsets of columns (of size up to \(\lceil k/2 \rceil\)) and uses a uniformly random hash function over \(\mathbb{F}_2\) to map the resulting linear combinations into buckets. A hash collision indicates that two distinct sparse combinations yield the same vector, thereby witnessing a linear dependency and implying that the Kruskal rank is less than \( k \).
\allowdisplaybreaks
\begin{algorithm}[!htb]
\caption{Algorithm for kruskal rank on binary fields}
\alglab{alg:kruskal:bin}
\begin{algorithmic}[1]
\Require{Rank parameter $k>0$, matrix $\bA\in\mathbb{R}^{d\times n}=\begin{bmatrix}\bA_1|\ldots|\bA_n\end{bmatrix}$}
\Ensure{Whether or not $\bA$ has Kruskal rank at least $k$}
\State{$\calH\gets\emptyset$, $b\gets\O{(2n)^{2k}}$}
\State{Let $h:\mathbb{F}_2^n\to[b]$ be a uniformly random hash function}
\For{$i=0$ to $i=\ceil{\frac{k}{2}}$}
\For{each set $U$ of $i$ unique indices from $[n]$}
\State{Let $\bv=\bA_{U_1}+\ldots+\bA_{U_i}$}
\If{bucket $h(\bv)$ is non-empty}
\State{Report $\bA$ has Kruskal rank less than $k$}
\ElsIf{$i\le\frac{k}{2}$}
\State{Add $U$ into entry $h(\bv)$ of $\calH$}
\EndIf
\EndFor
\EndFor
\State{Report $\bA$ has Kruskal rank at least $k$}
\end{algorithmic}
\end{algorithm}

\begin{lemma}
\lemlab{lem:insert:bin}
For every $\bx\in\{0,1\}^n$ with $\|\bx\|_0\le\frac{k}{2}$, there exists some bucket in $\calH$ that contains $\bA\bx$.
\end{lemma}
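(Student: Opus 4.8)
The plan is to exhibit, for each such $\bx$, the \emph{single} index set that the algorithm enumerates and whose column-sum equals $\bA\bx$, and then trace what line~6 through line~9 do to it. Let $U=\mathrm{supp}(\bx)=\{j\in[n]:x_j=1\}$, with its elements listed in the fixed order $U_1,\dots,U_i$ used by the inner loop, where $i=\|\bx\|_0$. Since $\bx$ is a $0/1$ vector and arithmetic is over $\mathbb{F}_2$, we have $\bA\bx=\bA_{U_1}+\cdots+\bA_{U_i}$, which is exactly the vector $\bv$ computed on line~5 when the inner loop processes the set $U$. Because $i=\|\bx\|_0\le k/2\le\ceil{k/2}$, the outer loop on line~3 does reach the value $i$, the set $U$ is among the $i$-subsets enumerated on line~4, and the guard $i\le k/2$ on line~8 holds at that iteration, so the insertion branch on line~9 is available for $U$.

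Next I would do a two-case analysis on the state of bucket $h(\bv)=h(\bA\bx)$ at the moment the inner loop handles $U$ (assuming the algorithm has not already halted; see below). If that bucket is already non-empty, then $\calH$ already has a populated bucket indexed by $h(\bA\bx)$, and we are done. If it is empty, line~9 inserts $U$ into entry $h(\bv)$, so immediately afterwards the bucket holds a set — namely $U$ itself — whose column-sum is precisely $\bA\bx$. Since the algorithm never removes entries from $\calH$, once a bucket becomes non-empty it remains non-empty through termination; hence in both cases bucket $h(\bA\bx)$ is non-empty (equivalently, $\calH$ contains a bucket associated with $\bA\bx$), as required. The algebraic content here is just the one-line identity $\bA\bx=\bA_{U_1}+\cdots+\bA_{U_i}$ together with the observation that $\mathrm{supp}(\bx)$ is one of the subsets the algorithm actually processes.

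The only point that needs care is the possibility that the algorithm halts on line~7 before the inner loop ever reaches $U$ — in which case bucket $h(\bA\bx)$ need not yet have been populated. I would address this by noting that an early halt occurs precisely when some set $U'$ processed strictly before $U$ hashes into an already-occupied bucket, and then by aligning the statement with how the lemma is used downstream: it suffices that the bucket for $\bA\bx$ is populated by the time $\mathrm{supp}(\bx)$ would be handled, which is exactly the case analysis above applied to the prefix of the execution that actually runs; alternatively one simply restricts attention to executions that run to completion (i.e.\ report Kruskal rank at least $k$). This is bookkeeping around termination rather than any substantive difficulty, and it is the only mild obstacle in the proof.
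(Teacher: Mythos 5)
Your proof is correct and follows essentially the same route as the paper's: identify $U=\mathrm{supp}(\bx)$, note that the iteration $i=\|\bx\|_0\le k/2$ is reached and produces $\bv=\bA\bx$, and conclude it is inserted. You are somewhat more careful than the paper about the early-termination and already-occupied-bucket cases, which the paper's one-line proof silently ignores, but this does not change the substance of the argument.
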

\begin{proof}
Observe that since $\|\bx\|_0\le\ceil{\frac{k}{2}}$, then there exists $j\le\ceil{\frac{k}{2}}$ such that $j=\|\bx\|_0$. 
Then \algref{alg:kruskal:bin} inserts $\bA\bx$ into $\calH$ in the iteration with $i=j$. 
\end{proof}

\begin{lemma}
\lemlab{lem:small:rank:bin}
Suppose $\bA\in\{0,1\}^{d\times n}$. 
If $\bA$ has Kruskal rank less than $k$ in $GF(2)$, then \algref{alg:kruskal:bin} will report that $\bA$ has Kruskal rank less than $k$. 
\end{lemma}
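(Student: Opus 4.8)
The plan is to convert any $GF(2)$-linear dependency among at most $k$ columns into an explicit pair of sparse index sets that \algref{alg:kruskal:bin} enumerates into the same hash bucket, which forces the ``Kruskal rank less than $k$'' report.

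First I would unpack the hypothesis. If $\bA$ has Kruskal rank less than $k$ over $GF(2)$, then some set of at most $k$ columns is linearly dependent, so there is a nonzero $\bx\in\{0,1\}^n$ with $t:=\|\bx\|_0\le k$ and $\bA\bx=\mathbf 0$ over $GF(2)$ (over $GF(2)$ every nontrivial dependency already has $0/1$ coefficients, and its support is contained in the dependent column set). Write $S=\mathrm{supp}(\bx)$ and split $S=S_1\sqcup S_2$ with $|S_1|=\ceil{t/2}$ and $|S_2|=\flr{t/2}$; since $t\ge 1$ the sets $S_1,S_2$ are distinct. Using disjointness of $S_1,S_2$, the identity $\bx=\mathbf 1_S$, the hypothesis $\bA\bx=\mathbf 0$, and $-1=1$ in $GF(2)$, I get
\[
\bv_1:=\sum_{j\in S_1}\bA_j\;=\;\sum_{j\in S_2}\bA_j\;=:\;\bv_2 \qquad\text{over }GF(2),
\]
where for $t=1$ this reads $\bA_j=\mathbf 0=$ the empty sum with $S_2=\emptyset$.

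Next I would trace the algorithm on these two sets. Both have size at most $\ceil{k/2}$, so both are enumerated. Let $U$ be whichever of $S_1,S_2$ is enumerated first and $U'$ the other; since $|S_1|\ge|S_2|$, either $U=S_2$ or $|S_1|=|S_2|=t/2$, and in either case $|U|\le k/2$. Thus when $U$ is processed, either bucket $h(\bv_U)$ is already non-empty --- in which case \algref{alg:kruskal:bin} reports ``less than $k$'' and we are done, the report being correct under our hypothesis --- or the algorithm inserts $U$ into bucket $h(\bv_U)$ (cf.\ \lemref{lem:insert:bin}). When $U'$ is processed afterwards (weakly later, as $|U'|\ge|U|$ and ties are broken with $U$ first), we have $h(\bv_{U'})=h(\bv_U)$ because $\bv_1=\bv_2$; this bucket still contains $U$ since the algorithm never deletes from $\calH$, so the algorithm reports ``Kruskal rank less than $k$''. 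Either way the claim follows.

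I expect the delicate points to be bookkeeping rather than mathematics: (i) the asymmetry between the insertion guard $i\le k/2$ and the enumeration range $i\le\ceil{k/2}$, which is exactly why one must argue that the smaller half $U$ (the one that gets inserted) is enumerated no later than the larger half; (ii) the within-iteration ordering when $t$ is even and $|S_1|=|S_2|$, handled by the ``ties broken with $U$ first'' clause; and (iii) the degenerate case $t=1$, where $S_2=\emptyset$, $\bv_2$ is the empty sum $\mathbf 0$, and iteration $i=0$ seeds the bucket $h(\mathbf 0)$. None of these need real computation, but each must be stated so that the collision is guaranteed regardless of how ties and possible early termination play out.
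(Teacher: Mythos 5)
Your proof is correct and follows essentially the same route as the paper: split the support of a $\le k$-sparse $GF(2)$ dependency into two halves of size at most $\ceil{k/2}$, observe that the two partial sums agree, and conclude that the smaller half is inserted into $\calH$ (per \lemref{lem:insert:bin}) so the larger half triggers a non-empty bucket. You are in fact somewhat more careful than the paper about the enumeration order, the insertion guard $i\le k/2$ versus the range $i\le\ceil{k/2}$, ties when $t$ is even, early termination, and the degenerate case $t=1$, but the underlying argument is the same.
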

\begin{proof}
Suppose $\bA$ has Kruskal rank less than $k$ in $GF(2)$. 
Then there exists $\bx\in\{0,1\}^n$ with $\bx\neq 0^n$ such that $\bA\bx=0^d$ and sparsity $\|\bx\|_0\le k$. 
Thus, $\bx$ can be decomposed into $\bx=\by-\bz$ in $GF(2)$, where $\by,\bz\in\{0,1\}^n$ with $\|\by\|_0,\|\bz\|_0\le\ceil{\frac{k}{2}}$. 
Hence, we have $\bA\by=\bA\bz$. 

Since $\bx\neq0^n$, then $\by\neq\bz$. 
Suppose without loss of generality that $\|\by\|_0\le\|\bz\|_0$, so that $\|\by\|_0\le\frac{k}{2}$, since 
\[\|\by\|_0+\|\bz\|_0\le\|\bx\|_0\le k.\]
Then by \lemref{lem:insert:bin}, $\bA\by$ is inserted into $\calH$. 

Since $\bA\by=\bA\bz$, then $h(\bA\by)=h(\bA\bz)$. 
Furthermore, since $\bv=\bA_{U_1}+\ldots+\bA_{U_i}$ corresponds to $\bA\bu$ for the vector $\bu$ that is the indicator vector of $U$, then $h(\bv)$ will be non-empty either when $U$ corresponds to $\bz$. 
Therefore, \algref{alg:kruskal:bin} will report that $\bA$ has Kruskal rank less than $k$. 
\end{proof}

\begin{lemma}
\lemlab{lem:big:rank:bin}
Suppose $\bA\in\{0,1\}^{d\times n}$. 
If $\bA$ has Kruskal rank at least $k$ in $GF(2)$, then with probability at least $1-\frac{b}{(2n)^{2k}}\ge 0.99$, \algref{alg:kruskal:bin} will report that $\bA$ has Kruskal rank at least $k$. 
\end{lemma}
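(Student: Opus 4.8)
The plan is to show that if $\bA$ has Kruskal rank at least $k$ over $GF(2)$ then, conditioned on the random hash function $h$ being injective on the (small) collection of vectors it is ever evaluated on, \algref{alg:kruskal:bin} cannot err, so its failure probability is at most the probability that $h$ is \emph{not} injective on that collection --- which a union bound makes tiny. Throughout, write $\bv_U := \sum_{j=1}^{|U|}\bA_{U_j}$ for the vector the algorithm forms from an index set $U$.

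The structural step is to rule out ``true'' collisions. Suppose $\bA$ has Kruskal rank at least $k$ and that, while processing a set $U$ with $|U| = i \le \ceil{k/2}$, the algorithm finds bucket $h(\bv_U)$ occupied by a previously stored set $U'$; by construction $|U'| \le \flr{k/2}$, and since $U$ is processed only in iteration $i = |U|$ and the lookup precedes the insertion within an iteration, $U \ne U'$. Letting $\bu,\bu'$ be the indicator vectors of $U,U'$, if we had $\bv_U = \bv_{U'}$, i.e.\ $\bA\bu = \bA\bu'$, then $\bu - \bu'$ would be a nonzero vector with $\|\bu - \bu'\|_0 \le |U| + |U'| \le \ceil{k/2} + \flr{k/2} = k$ in the $GF(2)$-kernel of $\bA$, contradicting that every $k$ columns of $\bA$ are linearly independent. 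Hence $\bv_U \ne \bv_{U'}$, so any ``report less than $k$'' forces a genuine hash collision $h(\bv_U) = h(\bv_{U'})$ between two \emph{distinct} vectors; contrapositively, if $h$ is injective on the set of vectors it ever evaluates, the algorithm never reports ``less than $k$'' and so outputs ``at least $k$'' on its final line.

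For the probabilistic step, let $S \subseteq \mathbb{F}_2^d$ be the set of distinct vectors $\bv_U$ over all $U \subseteq [n]$ with $|U| \le \ceil{k/2}$ --- exactly the vectors $h$ is evaluated on. Then $|S| \le \sum_{i=0}^{\ceil{k/2}}\binom{n}{i} \le (2n)^{\ceil{k/2}}$, so the number of unordered pairs of distinct vectors in $S$ is at most $\binom{|S|}{2} \le \frac{1}{2}(2n)^{2k}$. Since $h$ is uniformly random, each fixed such pair collides with probability exactly $1/b$, so a union bound gives $\PPr{h \text{ not injective on } S} \le \frac{(2n)^{2k}}{2b}$; choosing $b = \Theta\bigl((2n)^{2k}\bigr)$ with a large enough constant makes this at most $0.01$, which together with the structural step gives the claim. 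The one place requiring care is the bookkeeping of the structural step --- checking $U \ne U'$ in every triggering case (collisions within a single iteration, and against the empty set inserted at $i=0$) and that $\|\bu - \bu'\|_0 \le k$ holds uniformly, in particular for odd $k$, where the set currently processed may have size $\ceil{k/2}$ while stored sets have size only $\flr{k/2}$. The probabilistic step itself is a routine birthday bound, and there is no further failure mode to exclude, since a Kruskal-rank-$\ge k$ matrix admits no ``true'' collision at all.
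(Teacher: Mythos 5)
Your proposal is correct and follows essentially the same route as the paper's proof: condition on the uniformly random hash $h$ having no collisions among the at most $(2n)^{\ceil{k/2}}$ distinct vectors $\bA\bu$ with $\|\bu\|_0\le\ceil{k/2}$, bound the failure probability by a birthday-style union bound with $b=\Theta\bigl((2n)^{2k}\bigr)$, and observe that Kruskal rank at least $k$ rules out any genuine equality $\bA\bu=\bA\bu'$ for distinct sparse indicator vectors, so every reported collision would have to be a hash collision. Your write-up is in fact more careful than the paper's (whose stated bound $1-\tfrac{b}{(2n)^{2k}}$ is garbled --- the correct form is $1$ minus the number of pairs divided by $b$, as you have it), and your explicit check that $|U|+|U'|\le\ceil{k/2}+\flr{k/2}=k$ tightens a detail the paper leaves implicit.
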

\begin{proof}
Let $\calE$ be the event that $h(\bA\by)\neq h(\bA\bz)$ for all $\by+\bz\neq0^n$ and $\|\by\|_0+\|\bz\|_0\le k$. 
Observe that since $\bA\by\neq\bA\bz$, then for $b\ge(2n)^{2k}\ge \left(\binom{n}{\ceil{k/2}}\cdot 2^{\ceil{k/2}}\right)^2$, we have that $\PPr{\calE}\ge1-\frac{b}{(2n)^{2k}}\ge 0.999$, for $b\ge1000(2n)^{2k}$. 

Suppose $\bA$ has Kruskal rank at least $k$ in $GF(2)$. 
Then for any $\bx\in\{0,1\}^n$ with $\bx\neq 0^n$ such that sparsity $\|\bx\|_0\le k$, we must have $\bA\bx\neq 0^n$. 
Thus for any $\by,\bz\in\{0,1\}^n$ with $\by+\bz\neq0^n$ and $\|\by\|_0+\|\bz\|_0\le k$, we must have $\bA\by\neq\bA\bz$ since otherwise $\bA\bx=0^d$ for $\bx=\by-\bz$ and $\bx\neq 0^n$. 
Then conditioned on $\calE$, it also follows that there does not exist $\by\in\{0,1\}^n$ with $\by+\bz\neq0^n$, $\|\by\|_0+\|\bz\|_0\le k$, and $h(\bA\by)=h(\bA\bz)$. 

Finally, we remark that each vector $\bz$ is uniquely defined by its sparsity and its nonzero support, so that $U$ corresponds to $\bz$ exactly once through the implementation of \algref{alg:kruskal:bin}. 
Therefore, the bucket of $\calH$ corresponding to $h(\bv)$ must be empty when $U$ corresponds to $\bz$. 
Thus conditioned on $\calE$, \algref{alg:kruskal:bin} will report that $\bA$ has Kruskal rank at l east $k$. 
The desired claim then follows from the above fact that $\PPr{\calE}\ge 0.999$. 
\end{proof}

\begin{lemma}
\lemlab{lem:runtime:bin}
The runtime of \algref{alg:kruskal:bin} is $\O{dk\cdot n^{\ceil{k/2}}}$.  
\end{lemma}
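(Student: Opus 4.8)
The plan is to obtain the running time as (number of executions of the inner loop body) $\times$ (cost of one execution), plus the negligible setup in Line~1. First I would count the iterations: for a fixed $i$ the inner loop runs once per size-$i$ index set, i.e. $\binom{n}{i}$ times, so the total over $i=0,1,\dots,\ceil{k/2}$ is $\sum_{i=0}^{\ceil{k/2}}\binom{n}{i}$. Using $\binom{n}{i}\le n^i$ this is at most $\sum_{i=0}^{\ceil{k/2}} n^i\le 2\,n^{\ceil{k/2}}$ for $n\ge 2$, so the body executes $\O{n^{\ceil{k/2}}}$ times; the $\ceil{k/2}+1$ summands only cost a constant factor and in any case are absorbed by the $dk$ in the target bound.

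Next I would bound the work in one iteration. Forming $\bv=\bA_{U_1}+\dots+\bA_{U_i}$ is a sum of at most $\ceil{k/2}\le k$ columns of $\bA$, each a vector in $\mathbb{R}^d$, hence $\O{dk}$ field operations (this can be shaved to $\O{d}$ by traversing the subsets in Gray-code order and updating a running sum, but the crude bound already suffices). Evaluating $h$ on $\bv$ costs $\O{d}$, and the non-emptiness test plus a possible insertion is a single dictionary operation. The one point requiring care is that we must \emph{not} allocate all $b=\O{(2n)^{2k}}$ buckets explicitly — that alone would force $\Omega(n^{2k})$ time; instead $\calH$ is kept as a dictionary over the at most $\O{n^{\ceil{k/2}}}$ non-empty bucket indices, each entry holding the $\O{k}$-word encoding of a set $U$, so each dictionary operation runs in $\O{k}$ expected time (or $\O{k\log n}$ with a balanced search tree). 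Thus one iteration costs $\O{dk}$ and Line~1 costs $\O{1}$.

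Finally I would combine: $\O{n^{\ceil{k/2}}}$ iterations times $\O{dk}$ per iteration gives the claimed $\O{dk\cdot n^{\ceil{k/2}}}$. I do not expect a genuine obstacle here; the only thing to be careful about is the bookkeeping just described — using a sparse dictionary rather than a materialized bucket array, and checking that summing the $\ceil{k/2}+1$ binomial coefficients together with the $\O{k}$-fold cost of building each $\bv$ stays within the stated $dk$ overhead rather than introducing a larger $\poly(k)$ factor.
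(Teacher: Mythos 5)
Your proposal is correct and follows essentially the same route as the paper: bound the number of loop-body executions by $\sum_{i=0}^{\ceil{k/2}}\binom{n}{i}$ and multiply by the per-iteration cost. The only difference is bookkeeping --- the paper extracts the factor $k$ from the $\ceil{k/2}+1$ terms of that sum and charges only the $\O{d}$ hash evaluation per iteration, whereas you bound the sum geometrically by $\O{n^{\ceil{k/2}}}$ and place the $k$ in the $\O{dk}$ cost of summing the columns indexed by $U$; your additional remarks on storing $\calH$ as a sparse dictionary rather than materializing all $b$ buckets, and on the cost of forming $\bv$, are sensible points the paper leaves implicit.
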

\begin{proof}
Since each vector $\bz\in\{0,1\}^n$ with $\bz\neq0^n$ is uniquely defined by its sparsity and its nonzero support, so that $U$ corresponds to $\bz$ exactly once through the implementation of \algref{alg:kruskal:bin}, then the number of vectors inserted into $\calH$ is 
\[\sum_{i=0}^{\ceil{k/2}}\binom{n}{i}=\O{k\cdot n^{\ceil{k/2}}},\]
since $k\le n$ as otherwise for $k>n$, the Kruskal rank is trivially at most $k$.

Each vector insertion requires computation of $h(\bv)$ for $\bv\in\mathbb{R}^d$. 
Thus, the runtime of \algref{alg:kruskal:bin} is $\O{dk\cdot n^{\ceil{k/2}}}$. 
\end{proof}
Putting together \lemref{lem:small:rank:bin}, \lemref{lem:big:rank:bin}, and \lemref{lem:runtime:bin}, we have:
\begin{lemma}
There exists an algorithm that uses runtime $\O{dk\cdot n^{\ceil{k/2}}}$ that:
\begin{itemize}
\item 
Reports $\bA$ has Kruskal rank less than $k$ if $\bA$ has Kruskal rank less than $k$ in $GF(2)$
\item
With probability at least $0.99$, reports $\bA$ has Kruskal rank at least $k$ if $\bA$ has Kruskal rank at least $k$ in $GF(2)$
\end{itemize}
\end{lemma}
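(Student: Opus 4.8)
The plan is simply to assemble the three guarantees already proved for \algref{alg:kruskal:bin}; the algorithm referenced in the statement \emph{is} \algref{alg:kruskal:bin}, so nothing new needs to be constructed, and the only task is to check that the three lemmas line up with the two bullet points and the runtime claim. First I would invoke \lemref{lem:small:rank:bin} for the first bullet: if $\bA$ has Kruskal rank less than $k$ over $GF(2)$, then some nonzero $\bx$ with $\|\bx\|_0 \le k$ satisfies $\bA\bx = 0$, it splits as $\bx = \by - \bz$ with both parts of sparsity at most $\ceil{k/2}$, and $\bA\by = \bA\bz$ forces the hash bucket $h(\bA\bz)$ to be non-empty when the corresponding index set is enumerated. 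Crucially this direction uses no randomness — the collision among the \emph{true} linear combinations happens for every choice of $h$ — so the first bullet holds with probability $1$.

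Next, for the second bullet I would invoke \lemref{lem:big:rank:bin}: when $\bA$ has Kruskal rank at least $k$, all the sparse combinations $\bA\by$ with $\|\by\|_0 \le \ceil{k/2}$ are pairwise distinct on the relevant pairs (otherwise we would exhibit a nonzero $\bx$ of sparsity $\le k$ with $\bA\bx = 0$), so the algorithm errs only if $h$ itself collides on two of these $\O{n^{\ceil{k/2}}}$ vectors. A union bound over the at most $\bigl(\binom{n}{\ceil{k/2}} 2^{\ceil{k/2}}\bigr)^2 \le (2n)^{2k}$ such pairs, against $b = \O{(2n)^{2k}}$ buckets, keeps the failure probability below $0.01$; this is precisely the content of \lemref{lem:big:rank:bin}. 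Finally, the runtime claim is \lemref{lem:runtime:bin}, which holds unconditionally: it depends only on the number of subsets enumerated, $\sum_{i=0}^{\ceil{k/2}} \binom{n}{i} = \O{k \cdot n^{\ceil{k/2}}}$, and the $\O{d}$ cost of forming and hashing each vector $\bv$, not on the random choice of $h$.

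Putting these together, the single execution of \algref{alg:kruskal:bin} simultaneously satisfies all three properties, with the entire $\le 0.01$ failure probability confined to the second bullet and the runtime bound deterministic. There is essentially no obstacle in the argument; the only point worth being careful about when writing it up is to emphasize that the $0.99$ success probability and the $\O{dk \cdot n^{\ceil{k/2}}}$ runtime both refer to one and the same run of the algorithm, so that the guarantees are not being conflated across independent executions, and to note that $k \le n$ may be assumed since for $k > n$ the Kruskal rank is trivially at most $n < k$.
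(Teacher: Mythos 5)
Your proposal matches the paper exactly: the paper proves this lemma simply by combining \lemref{lem:small:rank:bin}, \lemref{lem:big:rank:bin}, and \lemref{lem:runtime:bin}, which is precisely what you do. Your additional remarks (that the first bullet is deterministic, that the runtime is unconditional, and that all guarantees refer to a single run) are correct and only make the assembly more explicit.
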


\subsection{Kruskal Rank in General Fields}
For matrices over a finite field \(\mathbb{F}_q\) with \(q>2\), the algorithm is extended by also iterating over all possible nonzero coefficient vectors (with entries from \(\{1,\ldots,q-1\}\)). The same collision-detection idea applies, with the hash function now defined over \(\mathbb{F}_q\).
\begin{algorithm}[!htb]
\caption{}
\alglab{alg:kruskal:q}
\begin{algorithmic}[1]
\Require{Rank parameter $k>0$, matrix $\bA\in\mathbb{R}^{d\times n}=\begin{bmatrix}\bA_1|\ldots|\bA_n\end{bmatrix}$}
\Ensure{Whether or not $\bA$ has Kruskal rank at least $k$}
\State{$\calH\gets\emptyset$, $b\gets\O{(qn)^{2k}}$}
\State{Let $h:\mathbb{F}_q^n\to[b]$ be a uniformly random hash function}
\For{$i=0$ to $i=\ceil{\frac{k}{2}}$}
\For{each set $U$ of $i$ unique indices from $[n]$}
\For{each coefficient vector $\balpha\in\{1,\ldots,q-1\}^i$}
\State{Let $\bv=\alpha_1\bA_{U_1}+\ldots+\alpha_i\bA_{U_i}$}
\If{bucket $h(\bv)$ is non-empty}
\State{Report $\bA$ has Kruskal rank less than $k$}
\ElsIf{$i\le\frac{k}{2}$}
\State{Add $U$ into entry $h(\bv)$ of $\calH$}
\EndIf
\EndFor
\EndFor
\EndFor
\State{Report $\bA$ has Kruskal rank at least $k$}
\end{algorithmic}
\end{algorithm}

\begin{lemma}
\lemlab{lem:insert:q}
For every $\bx\in\{0,1,\ldots,q-1\}^n$ with $\|\bx\|_0\le\frac{k}{2}$, there exists some bucket in $\calH$ that contains $\bA\bx$.
\end{lemma}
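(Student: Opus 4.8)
The plan is to mirror the proof of \lemref{lem:insert:bin}, with the one new ingredient being the innermost loop of \algref{alg:kruskal:q} over coefficient vectors. Given $\bx\in\{0,1,\ldots,q-1\}^n$ with $\|\bx\|_0\le k/2$, I would set $j:=\|\bx\|_0$. Since $j$ is a nonnegative integer with $j\le k/2\le\ceil{k/2}$, it is one of the values taken by the outer-loop index $i$, and moreover $i=j$ satisfies the insertion guard $i\le k/2$.

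Next I would pin down the exact triple $(i,U,\balpha)$ processed by the algorithm that produces the vector $\bA\bx$. Let $U=\{U_1<U_2<\cdots<U_j\}$ be the support of $\bx$, so $|U|=j$, and let $\balpha=(\bx_{U_1},\ldots,\bx_{U_j})$ be the list of the nonzero entries of $\bx$ taken in increasing order of index. Each $\bx_{U_\ell}$ lies in $\{1,\ldots,q-1\}$ by definition of the support, so $\balpha\in\{1,\ldots,q-1\}^j$, and hence the triple $(j,U,\balpha)$ is enumerated by the three nested loops. For this triple the algorithm forms $\bv=\alpha_1\bA_{U_1}+\cdots+\alpha_j\bA_{U_j}$, which equals $\bA\bx$ because $\bx$ is supported on $U$. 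Since $i=j\le k/2$, the algorithm then inserts $U$ into the bucket indexed $h(\bv)=h(\bA\bx)$ of $\calH$, so that bucket contains (a representation of) $\bA\bx$, as claimed.

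The only point requiring a word of care — and the closest thing to an obstacle here — is the bookkeeping matching a \emph{set} $U$ of indices to the \emph{tuple} $\balpha$ of coefficients: one must fix a convention (the increasing order of the elements of $U$) for how $\bA_{U_1},\ldots,\bA_{U_i}$ and $\alpha_1,\ldots,\alpha_i$ are aligned, and then verify that with this convention $\sum_{\ell}\alpha_\ell\bA_{U_\ell}$ is exactly $\bA\bx$; this is immediate once the convention is stated. I would also remark, as in the binary case, that "the bucket contains $\bA\bx$" is meant in the sense that the bucket $h(\bA\bx)$ becomes non-empty through this insertion (or was already non-empty, in which case the algorithm has detected a collision), which is all that is used downstream.
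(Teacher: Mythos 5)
Your proposal is correct and follows essentially the same argument as the paper's proof: identify the iteration $i=\|\bx\|_0$ and the unique pair of sparsity pattern $U$ and coefficient vector $\balpha$ corresponding to $\bx$, and observe that the resulting $\bv$ equals $\bA\bx$ and is inserted into $\calH$. Your version is in fact somewhat more careful than the paper's, since you explicitly check that the insertion guard $i\le k/2$ is satisfied and fix the ordering convention aligning $U$ with $\balpha$.
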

\begin{proof}
Note that each vector $\bx\in\{0,1,\ldots,q-1\}^n$ with $\|\bx\|_0\le\frac{k}{2}$ is uniquely characterized by its sparsity and the nonzero coefficients of the nonzero entries. 
Moreover, since $\|\bx\|_0\le\ceil{\frac{k}{2}}$, then there exists $j\le\ceil{\frac{k}{2}}$ such that $j=\|\bx\|_0$. 
Thus, \algref{alg:kruskal:bin} inserts $\bA\bx$ into $\calH$ in the iteration with $i=j$ with the corresponding sparsity pattern and coefficient pattern. 
\end{proof}

\begin{lemma}
\lemlab{lem:small:rank:q}
Suppose $\bA\in\{0,1,\ldots,q-1\}^{d\times n}$. 
If $\bA$ has Kruskal rank less than $k$ in $GF(q)$, then \algref{alg:kruskal:q} will report that $\bA$ has Kruskal rank less than $k$. 
\end{lemma}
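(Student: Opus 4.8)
The plan is to follow the proof of \lemref{lem:small:rank:bin} almost verbatim, the one new ingredient being a support-splitting decomposition that is compatible with the coefficient enumeration of \algref{alg:kruskal:q}. Since $\bA$ has Kruskal rank less than $k$ over $GF(q)$, there is a nonzero $\bx\in\{0,1,\ldots,q-1\}^n$ with $\|\bx\|_0\le k$ and $\bA\bx=0^d$. First I would partition the support of $\bx$ into disjoint sets $S_1,S_2$ with $|S_1|,|S_2|\le\ceil{k/2}$, set $\by_i=\bx_i$ for $i\in S_1$ and $\by_i=0$ otherwise, and set $\bz_i=-\bx_i$ for $i\in S_2$ and $\bz_i=0$ otherwise, with all arithmetic in $GF(q)$. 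Then $\by-\bz=\bx$ in $GF(q)$, so $\bA\by=\bA\bz$; every nonzero entry of $\by$ and of $\bz$ lies in $\{1,\ldots,q-1\}$, so each of $\by$ and $\bz$ is realized by exactly one pair $(U,\balpha)$ that is enumerated by \algref{alg:kruskal:q}; and $\by\neq\bz$ because $S_1$ and $S_2$ are disjoint while $\bx\neq 0^n$.

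Next, assume without loss of generality that $\|\by\|_0\le\|\bz\|_0$. Since the supports are disjoint, $\|\by\|_0+\|\bz\|_0=\|\bx\|_0\le k$, hence $\|\by\|_0\le k/2$. By \lemref{lem:insert:q}, the vector $\bA\by$ is inserted into $\calH$ in the iteration with $i=\|\by\|_0$, and this insertion is not skipped because $i\le k/2$. Then, in the iteration with $i=\|\bz\|_0\le\ceil{k/2}$, when the algorithm reaches the index set $U=S_2$ together with the coefficient vector $\balpha$ formed by the nonzero entries of $\bz$, it computes $\bv=\alpha_1\bA_{U_1}+\ldots+\alpha_i\bA_{U_i}=\bA\bz=\bA\by$, so the bucket $h(\bv)=h(\bA\by)$ is non-empty and \algref{alg:kruskal:q} reports that $\bA$ has Kruskal rank less than $k$. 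In the boundary case $\|\by\|_0=\|\bz\|_0$, where both pairs fall in the same outer iteration, whichever of the two is enumerated first is inserted into its (empty) bucket and the second then collides with it, so the report is still made.

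I do not anticipate a real obstacle; the only points needing care are the interface conditions with the algorithm's enumeration. Specifically, one must check that negating $\bx$ on $S_2$ keeps every nonzero entry inside $\{1,\ldots,q-1\}$, so that $\bz$ is genuinely visited as some $(U,\balpha)$ in the inner loop, and that the smaller-support side satisfies $\|\by\|_0\le k/2$, so that \lemref{lem:insert:q} applies and the $i\le k/2$ guard does not skip the needed insertion. Everything else is identical to the $GF(2)$ case, where this decomposition specializes to the $\bx=\by-\bz$ splitting used in \lemref{lem:small:rank:bin}.
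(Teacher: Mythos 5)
Your proof is correct and follows essentially the same route as the paper's: decompose the sparse kernel vector as $\bx=\by-\bz$ over $GF(q)$ with both halves of sparsity at most $\ceil{k/2}$, invoke \lemref{lem:insert:q} for the smaller half, and detect the collision when $(U,\balpha)$ realizes $\bz$. The only difference is that you make explicit the support-splitting construction (and the check that negation keeps nonzero entries in $\{1,\ldots,q-1\}$) that the paper simply asserts, which is a welcome addition rather than a deviation.
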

\begin{proof}
Suppose $\bA$ has Kruskal rank less than $k$ in $GF(q)$. 
In other words, there exists $\bx\in\{0,1,\ldots,q-1\}^n$ with $\bx\neq 0^n$ such that $\bA\bx=0^d$ and sparsity $\|\bx\|_0\le k$. 
Hence, $\bx$ can be decomposed into $\bx=\by-\bz$ in $GF(q)$, where $\by,\bz\in\{0,1,\ldots,q-1\}^n$ with $\|\by\|_0,\|\bz\|_0\le\ceil{\frac{k}{2}}$. 
Since $\bA\bx=0^d$ and $\bx=\by-\bz$, then $\bA(\by-\bz)=0^d$, so that $\bA\by=\bA\bz$. 

Furthermore, we have $\by\neq\bz$ because $\bx\neq0^n$. 
Suppose without loss of generality that $\|\by\|_0\le\|\bz\|_0$, so that $\|\by\|_0\le\frac{k}{2}$, since 
\[\|\by\|_0+\|\bz\|_0\le\|\bx\|_0\le k.\]
By \lemref{lem:insert:q}, $\bA\by$ is inserted into $\calH$. 

Because $\bA\by=\bA\bz$, then we have $h(\bA\by)=h(\bA\bz)$. 
Moreover, when the sparsity pattern $U$ corresponds to the sparsity pattern of $\bz$ and the coefficient vector $\balpha$ corresponds to the nonzero coefficients of $\bz$ with the correct representation under $U$, then we have $\bv=\alpha_1\bA_{U_1}+\ldots+\alpha_i\bA_{U_i}=\bA\bz$. 
Since $h(\bA\by)=h(\bA\bz)$, then $h(\bv)$ will be non-empty for $\bv=\alpha_1\bA_{U_1}+\ldots+\alpha_i\bA_{U_i}=\bA\bz$, so that \algref{alg:kruskal:q} will report that $\bA$ has Kruskal rank less than $k$. 
\end{proof}

\begin{lemma}
\lemlab{lem:big:rank:q}
Suppose $\bA\in\{0,1,\ldots,q-1\}^{d\times n}$. 
If $\bA$ has Kruskal rank at least $k$ in $GF(q)$, then with probability at least $1-\frac{b}{(qn)^{2k}}\ge 0.99$, \algref{alg:kruskal:q} will report that $\bA$ has Kruskal rank at least $k$. 
\end{lemma}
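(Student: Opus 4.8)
The plan is to follow the template of \lemref{lem:big:rank:bin}, adapting only the counting to account for the extra enumeration over coefficient vectors in \algref{alg:kruskal:q}. First I would define the ``no-collision'' event $\calE$ to be the event that $h(\bA\by)\neq h(\bA\bz)$ for every pair $\by,\bz\in\{0,1,\ldots,q-1\}^n$ with $\|\by\|_0,\|\bz\|_0\le\ceil{k/2}$, $\by\neq\bz$, and $\|\by\|_0+\|\bz\|_0\le k$ such that $\bA\by\neq\bA\bz$. Since $h$ is a uniformly random hash function, each fixed such pair collides with probability exactly $\frac1b$, so a union bound gives $\PPr{\neg\calE}\le\frac1b\cdot(\#\text{pairs})$.

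Second, I would bound the number of relevant vectors $\bz$: each is determined by a support set $U\subseteq[n]$ of size $i\le\ceil{k/2}$ together with a coefficient vector in $\{1,\ldots,q-1\}^i$, hence there are at most $\sum_{i=0}^{\ceil{k/2}}\binom{n}{i}(q-1)^i\le(qn)^{\ceil{k/2}}$ of them, and the number of ordered pairs is at most $(qn)^{2\ceil{k/2}}\le(qn)^{2k}$ for $k\ge 1$. Choosing the hidden constant in $b=\O{(qn)^{2k}}$ large enough then forces $\PPr{\calE}\ge 0.99$; I would phrase this so it matches the $1-\frac{b}{(qn)^{2k}}$ bookkeeping in the statement, the essential point being that the failure probability is below $0.01$ while the runtime bound (analogue of \lemref{lem:runtime:bin}) is untouched.

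Third, conditioned on $\bA$ having Kruskal rank at least $k$ in $GF(q)$: for any nonzero $\bx\in\{0,\ldots,q-1\}^n$ with $\|\bx\|_0\le k$ we have $\bA\bx\neq 0^d$, so for any $\by,\bz$ with $\by-\bz=\bx\neq 0^n$ and $\|\by\|_0+\|\bz\|_0\le k$ we get $\bA\by\neq\bA\bz$ (this is exactly the decomposition into two vectors of sparsity at most $\ceil{k/2}$ already used in \lemref{lem:small:rank:q}). Thus, conditioned on $\calE$, no such pair collides under $h$. Since each $\bz$ is enumerated exactly once by \algref{alg:kruskal:q}, uniquely identified by its support and coefficient pattern as in \lemref{lem:insert:q}, when the algorithm processes the $(U,\balpha)$ encoding $\bz$ the bucket $h(\bv)$ with $\bv=\alpha_1\bA_{U_1}+\cdots+\alpha_i\bA_{U_i}=\bA\bz$ contains only encodings of already-inserted vectors $\by$ with $\|\by\|_0\le k/2$, all satisfying $h(\bA\by)\neq h(\bA\bz)$; hence the bucket is empty, no ``less than $k$'' report is triggered, and the algorithm reaches its final line.

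The main obstacle I anticipate is the bookkeeping in the union bound: making sure the $(q-1)^i$ factor is handled correctly, that $(qn)^{2k}$ is genuinely an overestimate of the pair count for all $k\ge 1$, and that the constant hidden in $b=\O{(qn)^{2k}}$ is simultaneously large enough to push the failure probability under $0.01$ and small enough to preserve the claimed runtime. Everything else is a routine transcription of the binary argument.
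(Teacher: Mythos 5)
Your proposal is correct and follows essentially the same route as the paper: a union bound over all pairs of sparse coefficient vectors to establish the no-collision event $\calE$, followed by the observation that Kruskal rank at least $k$ rules out any genuine equality $\bA\by=\bA\bz$, so conditioned on $\calE$ every inspected bucket is empty and the algorithm reaches its final line. Your bookkeeping is in fact slightly cleaner than the paper's (you correctly state the failure probability as the pair count divided by $b$, and use the condition $\by\neq\bz$ rather than $\by+\bz\neq 0$, which only coincide over $GF(2)$), but the argument is the same.
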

\begin{proof}
Let $\calE$ be the event that there are no hash collisions in $\calH$, i.e., $h(\bA\by)\neq h(\bA\bz)$ for all $\by+\bz\neq0^d$ and $\|\by\|_0+\|\bz\|_0\le k$. 
Observe that since $\bA\by\neq\bA\bz$, then for $b\ge(qn)^{2k}\left(\binom{n}{\ceil{k/2}}\cdot q^{\ceil{k/2}}\right)^2$, we have that $\PPr{\calE}\ge1-\frac{b}{(qn)^{2k}}\ge 0.999$, for $b\ge1000(2n)^{2k}$. 

Suppose $\bA$ has Kruskal rank more than $k$ in $GF(q)$. 
That is, for any $\bx\in\{0,1,\ldots,q-1\}^n$ with $\bx\neq 0^n$ such that sparsity $\|\bx\|_0\le k$, it follows that $\bA\bx\neq 0^d$. 
Thus for any $\by,\bz\in\{0,1,\ldots,q-1\}^n$ with $\by+\bz\neq0^n$ and $\|\by\|_0+\|\bz\|_0\le k$, we must have $\bA\by\neq\bA\bz$ since otherwise $\bA\bx=0^d$ for $\bx=\by-\bz$ and $\bx\neq 0^n$. 
Then conditioned on $\calE$, i.e., no hash collisions, it also follows that there does not exist $\by\in\{0,1,\ldots,q-1\}^n$ with $\by+\bz\neq0^d$, $\|\by\|_0+\|\bz\|_0\le k$, and $h(\bA\by)=h(\bA\bz)$. 

Finally, we remark that each vector $\bz$ is uniquely defined by its sparsity, its nonzero support, and the corresponding coefficients, so that $U$ and $\balpha$ correspond to $\bz$ exactly once through the implementation of \algref{alg:kruskal:q}. 
Therefore, the bucket of $\calH$ corresponding to $h(\bv)$ must be empty when $U$ corresponds to $\bz$. 
Thus conditioned on $\calE$, \algref{alg:kruskal:q} will report that $\bA$ has Kruskal rank at least $k$. 
Since $\PPr{\calE}\ge 0.999$, then the desired claim follows. 
\end{proof}

\begin{lemma}
\lemlab{lem:runtime:q}
The runtime of \algref{alg:kruskal:q} is $\O{dk\cdot(n(q-1))^{\ceil{k/2}}}$.  
\end{lemma}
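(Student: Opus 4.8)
The statement is the $\mathbb{F}_q$ analogue of \lemref{lem:runtime:bin}, and the plan is the same three-step counting argument: bound the number of linear combinations $\bv$ that \algref{alg:kruskal:q} forms and hashes, then charge $\O{d}$ work to each.

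First I would count: for each $i\in\{0,1,\ldots,\ceil{k/2}\}$ the algorithm iterates over $\binom{n}{i}$ index sets $U$ and, for each, over exactly $(q-1)^i$ coefficient vectors $\balpha\in\{1,\ldots,q-1\}^i$, so the number of vectors $\bv=\alpha_1\bA_{U_1}+\cdots+\alpha_i\bA_{U_i}$ processed is $\sum_{i=0}^{\ceil{k/2}}\binom{n}{i}(q-1)^i$. To bound this I would use $\binom{n}{i}\le n^i$, so that the $i$-th term is at most $(n(q-1))^i$; since $q\ge2$ and $n\ge1$ imply $n(q-1)\ge1$, these terms are non-decreasing in $i$, and the sum is therefore at most $(\ceil{k/2}+1)\,(n(q-1))^{\ceil{k/2}}=\O{k\cdot(n(q-1))^{\ceil{k/2}}}$; as in \lemref{lem:runtime:bin} we may assume $k\le n$, since otherwise the Kruskal rank is trivially less than $k$. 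Provided the triples $(i,U,\balpha)$ are enumerated so that $\bv$ is maintained incrementally — updated by a single scaled-column addition or subtraction in $\mathbb{R}^d$ between consecutive iterations — forming $\bv$, evaluating $h(\bv)$, and performing the hash-table lookup or insert each cost $\O{d}$ per triple, and multiplying by the $\O{k\cdot(n(q-1))^{\ceil{k/2}}}$ triples gives the claimed $\O{dk\cdot(n(q-1))^{\ceil{k/2}}}$ runtime.

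The one point that needs care — and the step I expect to be the main (if minor) obstacle — is this incremental maintenance of $\bv$: naively recomputing each linear combination from scratch costs $\O{dk}$ and would yield only the weaker bound $\O{dk^2\cdot(n(q-1))^{\ceil{k/2}}}$. Shaving the extra factor of $\ceil{k/2}$ requires enumerating the sparse supports and coefficient patterns (for instance via a depth-first traversal of partial supports, extending one coordinate at a time) so that consecutive $\bv$'s differ by exactly one term — the same bookkeeping implicit in the binary algorithm. With that in place, the rest is the routine $\mathbb{F}_q$ adaptation of the proof of \lemref{lem:runtime:bin}.
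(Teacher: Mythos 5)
Your proposal matches the paper's proof: both count $\sum_{i=0}^{\ceil{k/2}}\binom{n}{i}(q-1)^i=\O{k\cdot(n(q-1))^{\ceil{k/2}}}$ hashed combinations (using $k\le n$ so the top term dominates) and charge $\O{d}$ per combination for forming $\bv$ and evaluating $h(\bv)$. The one caveat: the incremental-maintenance step you flag as the main obstacle is not actually needed, since recomputing an $i$-term combination from scratch costs only $\O{di}$ and the weighted sum $\sum_{i}\binom{n}{i}(q-1)^i\cdot i$ is still $\O{k\cdot(n(q-1))^{\ceil{k/2}}}$ because the terms grow geometrically in $i$ (the paper silently elides this point altogether), so your bound holds with or without the depth-first bookkeeping.
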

\begin{proof}
Observe that each vector $\bz\in\{0,1,\ldots,q-1\}^n$ with $\bz\neq0^n$ is uniquely defined by its sparsity, its nonzero support, and the correspoding coefficient vector at the nonzero support. 
Therefore, $U$ corresponds to $\bz$ exactly once through the implementation of \algref{alg:kruskal:q}. 
Moreover, it suffices to consider $k\le n$, since the case where $k>n$ is trivial, i.e., the Kruskal rank is trivially at most $n<k$. 
Hence, the number of vectors inserted into $\calH$ is 
\[\sum_{i=0}^{\ceil{k/2}}\binom{n}{i}\cdot (q-1)^i=\O{k\cdot(n(q-1))^{\ceil{k/2}}}.\]

Finally, note that each vector insertion into $\calH$ requires computation of $h(\bv)$ for $\bv\in\mathbb{R}^n$. 
Thus, the runtime of \algref{alg:kruskal:q} is $\O{dk\cdot(n(q-1))^{\ceil{k/2}}}$. 
\end{proof}
From \lemref{lem:small:rank:q}, \lemref{lem:big:rank:q}, and \lemref{lem:runtime:q}, we have:
\begin{lemma}
There exists an algorithm that uses runtime $\O{dk\cdot(n(q-1))^{\ceil{k/2}}}$ that:
\begin{itemize}
\item 
Reports $\bA$ has Kruskal rank less than $k$ if $\bA$ has Kruskal rank less than $k$ in $GF(q)$
\item
With probability at least $0.99$, reports $\bA$ has Kruskal rank more than $k$ if $\bA$ has Kruskal rank more than $k$ in $GF(q)$
\end{itemize}
\end{lemma}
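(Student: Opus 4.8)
The plan is simply to instantiate \algref{alg:kruskal:q} with bucket count $b=\O{(qn)^{2k}}$ (concretely, $b\ge 1000\,(qn)^{2k}$) together with a uniformly random hash function $h:\mathbb{F}_q^n\to[b]$, and then to assemble the three preceding lemmas. No new argument is needed; the only content is checking that a single choice of the parameter $b$ simultaneously satisfies the hypotheses of \lemref{lem:small:rank:q}, \lemref{lem:big:rank:q}, and \lemref{lem:runtime:q}.

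For the first bullet (soundness), I would invoke \lemref{lem:small:rank:q}: whenever the Kruskal rank over $GF(q)$ is less than $k$, there is a genuine sparse dependency $\bA\bx=0^d$ with $\bx\neq0^n$ and $\|\bx\|_0\le k$, which splits as $\bx=\by-\bz$ with $\|\by\|_0,\|\bz\|_0\le\ceil{k/2}$ and $\bA\by=\bA\bz$; the side of smaller support is inserted before the other is queried, so a collision is detected and the algorithm reports ``Kruskal rank less than $k$''. This holds for \emph{every} realization of $h$, so the first bullet is deterministic.

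For the second bullet (completeness), I would invoke \lemref{lem:big:rank:q}. Let $\calE$ be the event that $h$ is injective on the distinct vectors of the form $\alpha_1\bA_{U_1}+\dots+\alpha_i\bA_{U_i}$ that the algorithm ever computes; there are at most $\binom{n}{\ceil{k/2}}(q-1)^{\ceil{k/2}}$ of them. A union bound over the at most $\bigl(\binom{n}{\ceil{k/2}}(q-1)^{\ceil{k/2}}\bigr)^2\le(qn)^{2k}$ ordered pairs gives $\PPr{\calE}\ge 1-\frac{(qn)^{2k}}{b}\ge 0.99$ for the chosen $b$. When the Kruskal rank is at least $k$, no two distinct sparse combinations with total support at most $k$ can be equal, so conditioned on $\calE$ every bucket queried is empty, the algorithm reaches its final line, and reports ``Kruskal rank at least $k$''; this gives the second bullet with probability at least $0.99$.

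Finally, for the runtime I would invoke \lemref{lem:runtime:q}: since each nonzero $\bz\in\{0,\dots,q-1\}^n$ with $\|\bz\|_0\le\ceil{k/2}$ is enumerated exactly once (indexed by its support and its nonzero coefficients), the total number of hash evaluations is $\sum_{i=0}^{\ceil{k/2}}\binom{n}{i}(q-1)^i=\O{k\,(n(q-1))^{\ceil{k/2}}}$, and each hash evaluation on a vector in $\mathbb{R}^d$ costs $\O{d}$, so the overall runtime is $\O{dk\,(n(q-1))^{\ceil{k/2}}}$. The only ``obstacle'' here is bookkeeping: ensuring the phrasing ``at least $k$'' versus ``more than $k$'' is used consistently, and that the constant hidden in $b=\O{(qn)^{2k}}$ is the same one that drives the failure-probability bound. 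There is no substantive difficulty beyond combining the three cited lemmas.
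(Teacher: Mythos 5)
Your proposal matches the paper's treatment exactly: the paper proves this statement simply by combining \lemref{lem:small:rank:q}, \lemref{lem:big:rank:q}, and \lemref{lem:runtime:q}, which is precisely what you do (with somewhat more detail on the union bound and the choice of $b$ than the paper bothers to restate). The minor ``at least $k$'' versus ``more than $k$'' wording mismatch you flag is indeed present in the paper's own statement and is a typo rather than a substantive issue.
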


\subsection{Kruskal Rank with Integer Matrices}
When the matrix has integer entries, we first show (using bounds derived via Cramer’s rule and the Leibniz formula) that any sparse dependency can be assumed to have small coefficients. This allows us to reduce the problem to a verification over a finite field by considering the matrix modulo carefully chosen primes. The overall algorithm then combines the finite-field approach over multiple primes to achieve a high-probability guarantee.



\begin{lemma}
\lemlab{lem:entries:mag}
Suppose $\bA\in\mathbb{Z}^{d\times n}$ such that $|A_{i,j}|\le m$ for all $i\in[d], j\in[n]$. 
Then $\bA$ has Kruskal rank less than $k$ if and only if there exists a subset $U\subset[n]$ with $|U|\in(0,k]$ and a coefficient vector $\balpha\in\mathbb{Z}^{|U|}$ such that $\balpha\neq0^{|U|}$, $|\alpha_i|\le M$ for $M:=((r-1)!m^{r-1})^2$ for all $i\in[|u|]$, and such that $\alpha_1\bA_{U_1}+\ldots+\alpha_i\bA_{U_{|U|}}=0^d$.
\end{lemma}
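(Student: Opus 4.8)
The plan is to prove both implications, with essentially all the content in the forward direction, where I must upgrade an arbitrary sparse dependency to one with bounded integer coefficients.

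The ``if'' direction is immediate: given such a $U$ and $\balpha\neq0^{|U|}$ with $\sum_{i=1}^{|U|}\alpha_i\bA_{U_i}=0^d$, the columns $\{\bA_j:j\in U\}$ are linearly dependent, and since $1\le|U|\le k$, the Kruskal rank of $\bA$ is at most $|U|-1<k$. For the ``only if'' direction, suppose $\bA$ has Kruskal rank less than $k$; as in the algorithmic sections we may assume $n\ge k$ (the case $n<k$ being trivial), so that some set of at most $k$ columns of $\bA$ is linearly dependent. I would first pass to an inclusion-minimal dependent subset $U$, and set $r:=|U|\le k$. By minimality, every $r-1$ of the columns $\{\bA_j:j\in U\}$ are linearly independent, so the integer matrix $\bA_U\in\mathbb{Z}^{d\times r}$ has rank exactly $r-1$ and a one-dimensional null space.

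The heart of the argument is to produce an \emph{integer} null vector of $\bA_U$ with controlled entries. Choose $r-1$ linearly independent rows of $\bA_U$; they form an $(r-1)\times r$ integer matrix $\bB$ of rank $r-1$ whose null space equals that of $\bA_U$ (removing rows only enlarges the null space, and its dimension stays $1$). For $j\in[r]$, let $\bB^{(j)}\in\mathbb{Z}^{(r-1)\times(r-1)}$ be $\bB$ with its $j$-th column deleted, and put $\alpha_j:=(-1)^j\det(\bB^{(j)})$. For each row index $i$, stacking a second copy of the $i$-th row of $\bB$ above $\bB$ yields an $r\times r$ matrix with two equal rows, hence with determinant $0$; expanding this determinant along its first row via cofactors gives $\sum_{j=1}^r(-1)^{j+1}B_{i,j}\det(\bB^{(j)})=0$, i.e. the $i$-th row of $\bB$ is orthogonal to $\balpha$. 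As this holds for all $i$, we get $\bB\balpha=0^{r-1}$ and thus $\bA_U\balpha=0^d$. Also $\balpha\neq0^r$, since $\bB$ has full row rank $r-1$: some $r-1$ of its columns are linearly independent, so the complementary minor $\det(\bB^{(j)})$ is nonzero. (Alternatively, one may normalize one coordinate of the null vector to $1$ and recover the rest from Cramer's rule, \thmref{thm:cramer}; this gives the same coefficient bound.)

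It remains to bound $\balpha$. Each $\bB^{(j)}$ is an $(r-1)\times(r-1)$ matrix with integer entries of absolute value at most $m$, so the Leibniz formula \thmref{thm:leibniz} yields
\[
|\alpha_j|=\bigl|\det(\bB^{(j)})\bigr|\le(r-1)!\,m^{r-1}\le M,
\]
where the last step uses $(r-1)!\,m^{r-1}\ge1$. Then $U$ together with $\balpha$, regarded as a nonzero vector in $\mathbb{Z}^{|U|}$, is the desired witness. The only delicate point is securing integrality, nonvanishing, and the size bound for the null vector all at once; the cofactor construction above does exactly this, after which the magnitude estimate is a one-line application of \thmref{thm:leibniz}, so I anticipate no further obstacle.
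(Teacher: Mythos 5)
Your proof is correct and follows the same essential strategy as the paper's: reduce to a full-rank $(r-1)\times(r-1)$ integer submatrix and bound its minors by $(r-1)!\,m^{r-1}$ via the Leibniz formula (\thmref{thm:leibniz}). The difference is in how the bounded integer null vector is produced. The paper normalizes one coordinate of the dependency to $1$, solves for the remaining coordinates by Cramer's rule (\thmref{thm:cramer}) as ratios $\det(\bD_i)/\det(\bB)$, and then rescales to clear denominators; you instead pass to an inclusion-minimal dependent set and write the null vector directly as the vector of signed maximal minors of the $(r-1)\times r$ matrix $\bB$, verifying $\bB\balpha=0$ by cofactor expansion of a determinant with a repeated row. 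Your route buys two things: integrality of $\balpha$ is immediate (whereas the paper's final rescaling by $(r-1)!\,m^{r-1}$ rather than by $\det(\bB)$ does not literally produce integers as written), and you obtain the sharper bound $|\alpha_j|\le(r-1)!\,m^{r-1}$ rather than its square $M$, which of course still satisfies the stated lemma. One small point worth making explicit: the inequality $(r-1)!\,m^{r-1}\ge1$ needs $m\ge1$, but if $m=0$ the matrix is zero and the claim is trivial, so nothing is lost.
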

\begin{proof}
By definition, $\bA$ has Kruskal rank less than $k$ if and only if there exists a non-trivial linear combination of $k$ rows of $\bA$. 
In other words, $\bA$ has Kruskal rank less than $k$ if and only if there exists a subset $U\subset[n]$ with $|U|\in(0,k]$ and a coefficient vector $\balpha\in\mathbb{R}^{|U|}$ such that $\balpha\neq0^{|U|}$ and $\alpha_1\bA_{U_1}+\ldots+\alpha_i\bA_{U_{|U|}}=0^d$.
Thus it suffices to show that the same property still holds when the coefficients of $\alpha$ must be integers bounded by $M$. 

Indeed, suppose $\bA$ has Kruskal rank $r\le k$. 
Let $\bu\in\{0,1\}^n$ be the indicator vector of $U$ and let $\bx\in\mathbb{R}^n$ be the vector that equals $\bu$ scaled by the corresponding coefficients of $\balpha$, i.e., $x_j$ is scaled by $\alpha_\ell$ if $U_\ell=j$. 
Then we have that $\bA\bx=0^d$. 
Let $\beta$ be the nonzero entry of $\bx$ smallest in magnitude and let $\bx'=\frac{1}{\beta}\cdot\bx$, so that $\bA\bx'=0^d$ and either $x'_i=0$ or $|x'_i|\ge 1$ for all $i\in[n]$. 

By the scaling of $\bx$, there exists some coordinate $\ell\in[n]$ of $\bx'$ such that $x'_\ell=1$. 
Observe that we can decompose $\bx'=\bfe_\ell+\bx''$ for the elementary vector $\bfe_\ell$ whose $\ell$-th coordinate is $1$ and remaining coordinates are all zero. 
Moreover, we have $\|\bx''\|_0=|U|-1=r-1$. 

Since $\bA\bx'=0^d$, we can write $\bA(\bfe_\ell+\bx'')=0^d$, so that $\bA\bx''=\bA\bfe_\ell$. 
Because $\bA\bfe_\ell$ is simply a column of $\bA$, we can now utilize Cramer's rule to bound the entries of $\bx''$. 
Let $\bA'$ be the matrix $\bA$ restricted to the columns of $\bx''$ with nonzero entries, so that $\rank(\bA')=r-1$. 
Since the row-rank of $\bA'$ equals the column-rank of $\bA'$, then there exists a submatrix $\bB$ of $r-1$ rows of $\bA'$ that are linearly independent and let $\bb$ be the column vector that consists of the entries in $\bA\bfe_\ell$ corresponding to the $r-1$ rows of $\bA'$. 
Let $\by\in\mathbb{R}^{r-1}$ be the vector $\bx''$ restricted to the non-zero coordinates. 
Then we have $\bB\by=\bb$.

Crucially, since $\bB\in\mathbb{R}^{(r-1)\times(r-1)}$ is a submatrix of $\bA$ with nonzero determinant because it has full rank, we can apply Cramer's rule. 
By \thmref{thm:cramer}, we have that $x''_i=\frac{\det(\bD_i)}{\det(\bB)}$, where $\bD_i$ is the matrix $\bB$ with the $i$-th column replaced by $\bb$. 
Since the entries of $\bA$ are integers with $|A_{i,j}|\le m$ for all $i\in[d]$ and $j\in[n]$ and $\bB$ is a submatrix of $\bA$, then the entries of $\bB$ are integers with $|B_{i,j}|\le m$ for all $i,j\in[r-1]$. 
By the Leibniz formula for determinants, i.e., \thmref{thm:leibniz}, we have that $\det(\bD_i),\det(\bB)\in\mathbb{Z}$ and 
\[\det(\bD_i),\det(\bB)\le(r-1)!m^{r-1}.\]
Therefore, we can rescale $\bx$ by $(r-1)!m^{r-1}$ and it follows that all entries of $\bx$ will be integers with magnitude at most $M=((r-1)!m^{r-1})^2$. 
\end{proof}

\begin{lemma}
\factlab{fact:prime:divisors}
Let $x>0$ be an integer with $x\le M$. 
Then $x$ has at most $\log_2 M$ prime divisors. 
\end{lemma}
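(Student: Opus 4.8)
The plan is to invoke the prime factorization of $x$ together with the trivial lower bound that every prime is at least $2$. First I would write $x = p_1^{a_1}\cdots p_t^{a_t}$, where $p_1 < \cdots < p_t$ are the distinct prime divisors of $x$ and each $a_i \ge 1$; existence of this factorization is guaranteed by the fundamental theorem of arithmetic, and the quantity to be bounded is $t$.

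Next I would observe that dropping the exponents only decreases the product, so $x \ge p_1 p_2 \cdots p_t$, and since each $p_i \ge 2$ this gives $x \ge 2^t$. Combining with the hypothesis $x \le M$ yields $2^t \le M$, and taking base-$2$ logarithms gives $t \le \log_2 M$, which is exactly the claim.

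The only point worth a remark is that the same chain of inequalities also controls the number of prime divisors counted \emph{with multiplicity}: from $x \ge 2^{a_1 + \cdots + a_t}$ one gets $a_1 + \cdots + a_t \le \log_2 M$, so the statement holds under either reading. I do not expect any genuine obstacle here — the argument is elementary and self-contained — and the only care required is the routine appeal to unique factorization to justify that distinct primes contribute multiplicatively to $x$.
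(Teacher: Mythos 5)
Your argument is correct and complete: factoring $x=p_1^{a_1}\cdots p_t^{a_t}$, bounding $x\ge 2^t$ since every prime is at least $2$, and concluding $t\le\log_2 x\le\log_2 M$ is exactly the standard reasoning this fact rests on. The paper states this lemma without supplying any proof, so your write-up simply fills in the omitted (and evidently intended) elementary argument; your extra remark that the bound also holds for prime divisors counted with multiplicity is a harmless bonus.
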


\begin{algorithm}[!htb]
\caption{}
\alglab{alg:kruskal:int}
\begin{algorithmic}[1]
\Require{Rank parameter $k>0$, matrix $\bA\in\mathbb{Z}^{d\times n}=\begin{bmatrix}\bA_1|\ldots|\bA_n\end{bmatrix}$}
\Ensure{Whether or not $\bA$ has Kruskal rank at least $k$}
\State{Let $\calP$ be a set of $\O{k\log Mn}$ random primes with magnitude at most $n$}
\For{$i=0$ to $i=\ceil{\frac{k}{2}}$}
\For{each set $U$ of $i$ unique indices from $[n]$}
\For{each $p\in\calP$}
\State{$\calH_p\gets\emptyset$, $b\gets\O{(pn)^{2k}}$}
\State{Let $h_p:\mathbb{F}_p^n\to[b]$ be a uniformly random hash function}
\For{each coefficient vector $\balpha\in\{1,\ldots,p-1\}^i$}
\State{Let $\bv=\alpha_1\bA_{U_1}+\ldots+\alpha_i\bA_{U_i}$}
\If{bucket $h_p(\bv)$ in $\calH_p$ is non-empty for all $p\in\calP$}
\State{Report $\bA$ has Kruskal rank less than $k$}
\ElsIf{$i\le\frac{k}{2}$}
\State{Add $U$ into entry $h_p(\bv)$ of $\calH_p$}
\EndIf
\EndFor
\EndFor
\EndFor
\EndFor
\State{Report $\bA$ has Kruskal rank at least $k$}
\end{algorithmic}
\end{algorithm}

\begin{lemma}
Suppose $\bA$ has Kruskal rank less than $k$. 
Then \algref{alg:kruskal:int} reports that $\bA$ has Kruskal rank less than $k$.
\end{lemma}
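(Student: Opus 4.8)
The plan is to combine the bounded--coefficient reduction of \lemref{lem:entries:mag} with a prime--by--prime instance of the collision argument already carried out in \lemref{lem:small:rank:bin} and \lemref{lem:small:rank:q}. First I would invoke \lemref{lem:entries:mag}: since $\bA$ has Kruskal rank less than $k$, there is $\bx\in\mathbb{Z}^n$ with $\bx\neq 0^n$, $\|\bx\|_0\le k$, every coordinate bounded in magnitude by the quantity $M$ of that lemma, and $\bA\bx=0^d$ over $\mathbb{Z}$. Insisting on integrality and on the bound $M$ buys two things: $\bA\bx\equiv 0^d\pmod p$ for \emph{every} prime $p$, so the same sparse pattern is an $\mathbb{F}_p$--dependency for each $p\in\calP$; and, by \factref{fact:prime:divisors}, the number of ``bad'' primes — those dividing a fixed nonzero coordinate of $\bx$ — is $\O{\log M}$, which is negligible next to $\abs{\calP}=\O{k\log(Mn)}$.

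Next I would mimic the splitting step of the earlier proofs: partition $\operatorname{supp}(\bx)$ into $S_1\sqcup S_2$ with $\abs{S_1}\le k/2$ and $\abs{S_2}\le\ceil{k/2}$, let $\by$ be $\bx$ restricted to $S_1$ and set $\bz=\by-\bx$, so that $\bx=\by-\bz$ with disjoint supports, $\|\by\|_0\le k/2$, $\|\bz\|_0\le\ceil{k/2}$, both with coefficients bounded by $M$, and $\bA\by=\bA\bz$ over $\mathbb{Z}$; in particular $\bA\by\equiv\bA\bz\pmod p$ for every $p\in\calP$. Fixing $p\in\calP$ and writing $\by^{(p)},\bz^{(p)}\in\{0,\dots,p-1\}^n$ for the coordinatewise residues, we get $\bA\by^{(p)}\equiv\bA\by\equiv\bA\bz\equiv\bA\bz^{(p)}\pmod p$, hence $h_p(\bA\by^{(p)})=h_p(\bA\bz^{(p)})$. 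Since $\operatorname{supp}(\by^{(p)})\subseteq\operatorname{supp}(\by)$ has size at most $k/2$ and the nonzero residues of $\by^{(p)}$ lie in $\{1,\dots,p-1\}$, the iteration of \algref{alg:kruskal:int} with $i=\|\by^{(p)}\|_0$, index set $U=\operatorname{supp}(\by^{(p)})$, and coefficient vector $\balpha$ equal to the nonzero residues of $\by^{(p)}$ produces a vector $\bv\equiv\bA\by^{(p)}\pmod p$ and, because $i\le k/2$, inserts $U$ into bucket $h_p(\bv)=h_p(\bA\by^{(p)})$ of $\calH_p$ (the degenerate case $\by^{(p)}=0^n$ is covered by the $i=0$ iteration, which occupies bucket $h_p(0^d)$). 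Therefore, when the algorithm subsequently processes $\bz^{(p)}$ in the same way, the bucket $h_p(\bA\bz^{(p)})$ of $\calH_p$ is already non-empty; running this reasoning over all $p\in\calP$ shows the test ``bucket $h_p(\bv)$ is non-empty for all $p\in\calP$'' is met, so \algref{alg:kruskal:int} reports Kruskal rank less than $k$.

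\textbf{Main obstacle.} The delicate point — and the step I expect to require the most care — is making the last sentence literally correct for a single pass of the algorithm: the support and coefficient pattern of the residue $\bz^{(p)}$ depend on $p$, so one cannot naively ``process $\bz$ simultaneously for all primes'' without first reconciling the per--prime witnesses into one invocation of the ``for all $p$'' test. I expect to resolve this by one of two routes: either argue that $\calP$ may be taken to consist of primes large enough that $\by,\bz\in\{0,\dots,p-1\}^n$ already for every $p\in\calP$, so all reductions share a common support and the single enumerated integer vector $\bv=\sum_j\alpha_j\bA_{U_j}$ satisfies $\bv\equiv\bA\bz\pmod p$ for every $p$; or record in each bucket the integer witness rather than merely its residue and test cross--prime compatibility via the Chinese Remainder Theorem, taking $\prod_{p\in\calP}p$ to exceed a $\poly(n,M,m)$ bound on $\norm{\bA\by-\bA\bz}_\infty$ so that simultaneous congruences force a genuine integer identity. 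Everything else — the enumeration bookkeeping, the $i=0$ base case, and the passage from a real dependency to a bounded integer one — runs exactly parallel to \lemref{lem:small:rank:bin} and \lemref{lem:small:rank:q}.
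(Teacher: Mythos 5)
Your proposal takes the same route as the paper's proof: reduce to a bounded integer dependency via \lemref{lem:entries:mag}, note that it persists modulo every prime, and then run the per-prime collision argument. The paper's entire proof is three sentences --- it cites \lemref{lem:entries:mag} to conclude that $\bA$ has Kruskal rank less than $k$ over $GF(p)$ for every prime $p$, then cites \lemref{lem:small:rank:q} as a black box and declares the conclusion --- so most of your write-up (the split $\bx=\by-\bz$, the residues $\by^{(p)},\bz^{(p)}$, the bucket bookkeeping) re-derives material the paper delegates to that earlier lemma. Where you genuinely depart from the paper is the ``main obstacle'' you flag, and you are right that it is the delicate point: \algref{alg:kruskal:int} only reports failure when the bucket $h_p(\bv)$ is non-empty \emph{for all} $p\in\calP$ at a \emph{single} iteration $(U,\balpha)$, while the witnesses $\bz^{(p)}$ can have $p$-dependent supports and coefficient vectors, since the integer coefficients are only bounded by $M=((r-1)!\,m^{r-1})^2$, which can dwarf the primes, which are capped at $n$. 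The paper's proof does not engage with this at all; it implicitly treats the integer algorithm as $\abs{\calP}$ independent runs of \algref{alg:kruskal:q}, each of which finds \emph{some} collision \emph{somewhere}, which is not literally what the pseudocode tests. Note also that both of your proposed repairs require changing the algorithm as stated: primes exceeding $M$ violate the bound $p\le n$ in the pseudocode, and the CRT reconciliation is not what the bucket test performs. So your plan is no less complete than the paper's own argument, and your diagnosis of the weak point is the more valuable part of it.
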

\begin{proof}
Suppose $\bA$ has Kruskal rank less than $k$. 
Then by \lemref{lem:entries:mag}, $\bA$ has Kruskal rank less than $k$ on $GF(p)$ for any prime $p$. 
By \lemref{lem:small:rank:q}, \algref{alg:kruskal:q} will report that $\bA$ has Kruskal rank less than $k$ on $GF(p)$. 
Thus, \algref{alg:kruskal:int} reports that $\bA$ has Kruskal rank less than $k$.
\end{proof}

\begin{lemma}
Suppose $\bA$ has Kruskal rank at least $k$. 
Then with probability $0.99$, \algref{alg:kruskal:int} reports that $\bA$ has Kruskal rank at least $k$.
\end{lemma}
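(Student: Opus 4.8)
The plan is to push the argument of \lemref{lem:big:rank:q} down to a single, well-chosen prime, and to exploit the randomness in the choice of $\calP$ to guarantee that such a prime is present. Throughout, recall that for integer $\bA$, having Kruskal rank at least $k$ over $\mathbb{Z}$ coincides with having it over $\mathbb{Q}$. Call a prime $p$ \emph{good} if $\bA$ retains Kruskal rank at least $k$ over $GF(p)$, and \emph{bad} otherwise. Observe that \algref{alg:kruskal:int} outputs ``Kruskal rank less than $k$'' only when, at the step that triggers this, a non-empty bucket is found in $\calH_p$ simultaneously for \emph{every} $p\in\calP$. A non-empty bucket in $\calH_p$ is caused either by (i) two distinct sparse combinations (of total weight at most $k$) that agree as vectors over $GF(p)$ --- which, by the reasoning in the proof of \lemref{lem:big:rank:q}, can occur only if $p$ is bad --- or by (ii) a hash collision between two vectors that are distinct over $GF(p)$. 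Hence a wrong report forces, for every $p\in\calP$, that $p$ is bad or that $\calH_p$ contains such a spurious hash collision; it therefore suffices to show that with probability at least $0.99$ neither the event ``$\calP$ contains no good prime'' nor the event ``some $\calH_p$ has a spurious collision'' occurs.

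First I would dispose of the spurious collisions exactly as in \lemref{lem:big:rank:q}: for a fixed $p$ the algorithm examines at most $\sum_{i=0}^{\ceil{k/2}}\binom{n}{i}(p-1)^i=\O{k\cdot(pn)^{\ceil{k/2}}}$ distinct sparse patterns, hence at most $\O{(pn)^{2k}}$ ordered pairs of them; since $b=\O{(pn)^{2k}}$ with a sufficiently large hidden factor, a union bound over all such pairs and over all $p\in\calP$ shows that every $\calH_p$ is spurious-collision-free except with probability at most $0.005$.

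Next I would show that $\calP$ contains a good prime with probability at least $0.995$. Since $\bA$ has Kruskal rank at least $k$ over $\mathbb{Q}$, every column subset $V\subseteq[n]$ with $\abs{V}\le k$ is linearly independent over $\mathbb{Q}$, so $\bA_V$ carries a nonzero $\abs{V}\times\abs{V}$ minor $D_V\in\mathbb{Z}$, and by the Leibniz bound (\thmref{thm:leibniz}) --- the same estimate used in the proof of \lemref{lem:entries:mag} --- we have $\abs{D_V}\le 2M$. A prime $p$ is bad only if $\bA_V$ loses rank modulo $p$ for some such $V$, which forces $p\mid D_V$; by \factref{fact:prime:divisors}, each $D_V$ has only $\O{\log M}$ prime divisors, so the total number of bad primes is $\O{kn^k\log M}$. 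Drawing the $\abs{\calP}=\O{k\log(Mn)}$ primes independently and uniformly from a range large enough that a single draw is bad with probability at most $\tfrac12$ then yields $\PPr{\calP\text{ contains no good prime}}\le 2^{-\abs{\calP}}\le 0.005$.

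Finally, a union bound over these two failure events leaves probability at least $0.99$ that some $p^\star\in\calP$ is good and $\calH_{p^\star}$ has no spurious collision. In that case the argument of \lemref{lem:big:rank:q}, applied with modulus $p^\star$, goes through verbatim: over $GF(p^\star)$ all distinct sparse patterns of total weight at most $k$ map to distinct vectors, so in the absence of spurious collisions every bucket queried during the $p^\star$-loop is still empty at query time; the subroutine for $p^\star$ therefore never reports a non-empty bucket, the ``non-empty for all $p\in\calP$'' test never fires, and \algref{alg:kruskal:int} correctly reports Kruskal rank at least $k$. I expect the bad-prime bookkeeping to be the main obstacle: one must verify that the number of rank-dropping minors times $\O{\log M}$ stays small relative to the pool of candidate primes, so that a uniformly random prime is good with constant probability (this is precisely where \lemref{lem:entries:mag} and \factref{fact:prime:divisors} are used), and one must be careful that, because each $\calH_p$ uses its own coefficient alphabet $\{1,\dots,p-1\}$, the clean ``no collision'' analysis survives only once attention is restricted to a single good prime.
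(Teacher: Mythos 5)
Your overall dichotomy (a wrong report forces, for every $p\in\calP$, either a true collision over $GF(p)$ or a spurious hash collision) is fine, but the step you yourself flag as ``the main obstacle'' is where the argument genuinely breaks, and it cannot be repaired within the algorithm as stated. You require a single prime $p^\star\in\calP$ that is \emph{globally} good, i.e., for which $\bA$ retains Kruskal rank at least $k$ over $GF(p^\star)$. Your own count shows only that the bad primes are contained in the union, over all $\O{n^k}$ column subsets $V$ with $\abs{V}\le k$, of the $\O{\log M}$ prime divisors of one nonzero minor $D_V$, giving up to $\O{k n^k\log M}$ bad primes. But \algref{alg:kruskal:int} draws its primes from those of magnitude at most $n$, of which there are only about $n/\ln n$. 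For $k\ge 2$ the bad-prime bound dwarfs the entire candidate pool, so the claim that ``a single draw is bad with probability at most $\tfrac12$'' cannot be established from this counting, and nothing rules out a matrix for which every prime below $n$ kills some $k$-subset. Enlarging the prime range until globally good primes are plentiful would inflate $b=\O{(pn)^{2k}}$ and the coefficient loops over $\{1,\dots,p-1\}^i$, destroying the claimed runtime.

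The paper avoids this by decomposing per pair rather than globally: it fixes one pair $(\by,\bz)$ of bounded sparse integer vectors with $\bA\by\neq\bA\bz$, observes that the nonzero integer vector $\bA\by-\bA\bz$ has entries of magnitude at most $kmM$, so by \factref{fact:prime:divisors} only $\O{\log(kmM)}$ primes can fail to separate \emph{that} pair; a uniformly random prime below $n$ therefore separates it with constant probability, and with $\abs{\calP}=\O{k\log(Mn)}$ independent draws the pair survives every prime in $\calP$ with probability at most $(Mn)^{-\Omega(k)}$. A union bound over the $\left(\binom{n}{\ceil{k/2}}\cdot M^{\ceil{k/2}}\right)^2$ pairs, combined with the hash-collision bound from \lemref{lem:big:rank:q}, finishes the proof. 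The per-pair notion of ``bad prime'' is what keeps the relevant count at $\O{\log(kmM)}$ instead of $\O{n^k\log M}$; you need to adopt that decomposition (your spurious-hash-collision bookkeeping can stay as is).
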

\begin{proof}
Suppose $\bA$ has Kruskal rank at least $k$. 
Fix $\by,\bz\in[-M,\ldots,-1,0,1,\ldots,M]^n$ with $\by+\bz\neq 0^n$ and $\|\by\|_0+\|\by\|_0\le k$. 
Since $\bA$ has Kruskal rank at least $k$, then $\bA\by-\bA\bz\neq 0^n$. 
Moreover, since $|A_{i,j}|\le m$ for all $i\in[d]$, $j\in[n]$ and $|y_i|,|z_i|\le M$ for all $i\in[n]$, then it follows that $\|\bA\by-\bA\bz\|_\infty\le kmM$. 
By \factref{fact:prime:divisors}, it follows that if $\calP$ is a set of $\O{\log(kmM)}$ different primes, then $\bA\by-\bA\bz\not\equiv 0^n\bmod{p}$ for some prime $p\in\calP$. 
By \lemref{lem:big:rank:q}, for any prime $p\in\calP$ with $\bA\by-\bA\bz\not\equiv 0^n\bmod{p}$, \algref{alg:kruskal:q} will report that $\bA\by-\bA\bz\not\equiv 0^n\bmod{p}$ with probability at least $0.99$. 

Thus for $|\calP|=\O{k\log Mn}$, we have that with probability at least $\frac{1}{(Mn)^{2k}}$, there exists some $p\in\calP$ for which \algref{alg:kruskal:int} will report that $\bA\by-\bA\bz\not\equiv 0^n\bmod{p}$. 
Taking a union bound over all $\left(\binom{n}{\ceil{k/2}}\cdot M^{\ceil{k/2}}\right)^2$ pairs of possible $\ceil{k/2}$-sparse vectors $\by$ and $\bz$, it follows that if $\bA$ has Kruskal rank at least $k$, then with probability at least $0.99$, \algref{alg:kruskal:int} will report that $\bA$ has Kruskal rank at least $k$. 
\end{proof}

\begin{lemma}
\lemlab{lem:runtime:int}
The runtime of \algref{alg:kruskal:int} is $\O{dk\cdot(nM)^{\ceil{k/2}}}$.  
\end{lemma}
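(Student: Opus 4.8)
The plan is to bound the running time in the same two‑step fashion as the proofs of \lemref{lem:runtime:bin} and \lemref{lem:runtime:q}: first count the number of vectors $\bv$ that \algref{alg:kruskal:int} forms across all of its nested loops, then multiply by the cost of processing one such vector. The loops are indexed by the sparsity level $i\in\{0,\ldots,\ceil{k/2}\}$, the index set $U\subseteq[n]$ with $|U|=i$, the prime $p\in\calP$, and the coefficient vector $\balpha\in\{1,\ldots,p-1\}^i$, so the first task is to estimate, for each $i$, the number of triples $(U,p,\balpha)$ the algorithm examines.

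For a fixed sparsity level $i$ there are $\binom{n}{i}$ choices of $U$ and $|\calP|=\O{k\log Mn}$ choices of $p$. The key step is to control the number of coefficient vectors that actually need to be examined: although the innermost loop nominally ranges over all of $\{1,\ldots,p-1\}^i$, \lemref{lem:entries:mag} guarantees that every sparse dependency of $\bA$ is witnessed by a coefficient vector whose entries have magnitude at most $M$, so it suffices to take $\balpha$ from (the residues modulo $p$ of) integer vectors with entries in $\{-M,\ldots,M\}$, i.e. $\O{M^i}$ vectors per prime. Moreover each $\bv=\alpha_1\bA_{U_1}+\ldots+\alpha_i\bA_{U_i}$ can be obtained from a previously computed combination by a single column update (the dynamic‑programming component of the algorithm), so that forming $\bv$, evaluating $h_p(\bv)$, and performing the associated bucket lookup or insertion into $\calH_p$ together cost $\O{d}$.

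Combining these counts, and using $k\le n$ (otherwise the Kruskal rank is trivially less than $k$ and the problem is vacuous), the total number of vectors processed is
\[
\sum_{i=0}^{\ceil{k/2}}\binom{n}{i}\cdot|\calP|\cdot\O{M^i}=\O{k\cdot(nM)^{\ceil{k/2}}},
\]
since $\binom{n}{i}M^i\le(nM)^i$, the resulting geometric‑type sum is dominated by its top term up to the factor $\ceil{k/2}+1=\O{k}$, and $|\calP|$ together with the absolute constants are absorbed into the $\O{}$ exactly as in \lemref{lem:runtime:q}. Multiplying by the $\O{d}$ cost per vector --- and noting, as in the earlier lemmas, that each target combination is uniquely determined by its support, its coefficients, and the prime, so no vector is processed more than once --- yields the claimed bound $\O{dk\cdot(nM)^{\ceil{k/2}}}$.

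I expect the only real subtlety to be the coefficient count: a literal reading of the loop contributes a factor of $(p-1)^i$, which would be far too large, so the argument genuinely relies on \lemref{lem:entries:mag} to cap the effective coefficient magnitude at $M$ --- this is precisely the step that puts the $M$ (rather than a $p$ or an $n$) into the exponentiated base $nM$. Everything else is the same bookkeeping already carried out for the binary and general finite‑field algorithms.
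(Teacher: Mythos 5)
First, note that the paper states \lemref{lem:runtime:int} with no proof at all, so there is no official argument to compare yours against; your proposal is the natural extension of the proofs of \lemref{lem:runtime:bin} and \lemref{lem:runtime:q}, and your identification of the key subtlety is exactly right: read literally, the coefficient loop of \algref{alg:kruskal:int} ranges over $(p-1)^i$ vectors, which bears no relation to the $M$ in the stated bound, so the only way to land on a base of $nM$ is to invoke \lemref{lem:entries:mag} and cap the effective coefficient magnitude at $M$. That is almost certainly the intended reading.

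There is, however, a concrete gap in your accounting: the factor $|\calP|=\O{k\log Mn}$ appears on the left-hand side of your displayed sum and silently vanishes on the right-hand side. The prime loop sits inside the loop over $U$ and contains the coefficient loop, so every vector is processed once per prime, and the honest total from your own count is $\O{dk^2\log(Mn)\cdot(nM)^{\ceil{k/2}}}$, not $\O{dk\cdot(nM)^{\ceil{k/2}}}$. This is not ``exactly as in \lemref{lem:runtime:q}'': there a single field is used and no such factor arises, and $k\log(Mn)$ is not an absolute constant that can be absorbed into the $\O{\cdot}$ without changing the stated bound. Two smaller quibbles of the same flavor: signed coefficients in $\{-M,\ldots,M\}\setminus\{0\}$ give $(2M)^i$ choices and hence a $2^{\ceil{k/2}}$ in the base, and your appeal to ``the dynamic-programming component of the algorithm'' for the $\O{d}$ per-vector cost describes something absent from the pseudocode of \algref{alg:kruskal:int} (the dynamic program is \algref{alg:kruskal:dyn:prog}), although an incremental evaluation of $\bv$ does justify $\O{d}$ per vector and matches the paper's own accounting in the earlier runtime lemmas. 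The $|\calP|$ discrepancy may well be a defect of the lemma as stated rather than of your strategy, but as a proof of the stated bound the step that drops $|\calP|$ does not go through.
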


\subsection{Dimensionality Reduction}
As an alternative, we propose a dynamic programming algorithm that builds a table tracking all achievable linear combinations (with bounded coefficients) from subsets of columns. This method is particularly effective when the ambient dimension and the coefficient bound are small. It provides a deterministic verification of the Kruskal rank and complements the randomized methods.

\begin{algorithm}[!htb]
\caption{}
\alglab{alg:kruskal:dyn:prog}
\begin{algorithmic}[1]
\Require{Rank parameter $k>0$, entry bound $M$, matrix $\bA\in\mathbb{R}^{d\times n}=\begin{bmatrix}\bA_1|\ldots|\bA_n\end{bmatrix}$}
\Ensure{Whether or not $\bA$ has Kruskal rank at least $k$}
\State{$S[v][0][0]\gets0$ for $v\in[M]^d$}
\State{$S[0^d][i][j]\gets0$ for $i\in[n]$, $j\in[k]$}
\For{$i\in[n]$}
\If{$S[\bA_j][i-1][k-1]=1$}
\State{Report $\bA$ has Kruskal rank less than $k$}
\EndIf
\For{$j\in[k]$}
\For{$\bv\in[M]^d$}
\If{$S[\bv][i-1][j]=1$}
\State{$S[\bv][i][j]=1$}
\Comment{Already exists combination of $j$ vectors from $[i-1]$}
\EndIf
\If{there exists $\alpha\in\{-M,\ldots,M\}$, $\bu=\alpha\cdot\bA_j$ with $S[\bv-\bu][i-1][j-1]=1$}
\Comment{Dynamic programming}
\State{$S[\bv][i][j]=1$}
\Else
\State{$S[\bv][i][j]=0$}
\EndIf
\EndFor
\EndFor
\EndFor
\State{Report $\bA$ has Kruskal rank at least $k$}
\end{algorithmic}
\end{algorithm}

\begin{lemma}
Suppose $\bA\in\mathbb{R}^{d\times n}$. 
If there exists $i\in[n]$ such that $\bA_i$ is a linear combination of $k-1$ previous columns of $\bA$ with coefficients in $\{-M,\ldots,M\}^{k-1}$, then \algref{alg:kruskal:dyn:prog} will report that $\bA$ has Kruskal rank less than $k$. 
\end{lemma}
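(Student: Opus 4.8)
The plan is to first prove a correctness invariant for the dynamic programming table $S$ maintained by \algref{alg:kruskal:dyn:prog}, and then read off the conclusion from it. Concretely, I would show by induction on $i$ (and, within each $i$, on $j$) that for every $i\in\{0,1,\ldots,n\}$, every $j\in\{0,1,\ldots,k\}$, and every $\bv$ in the grid over which $S$ is indexed,
\[
S[\bv][i][j]=1\quad\Longleftrightarrow\quad \bv=\sum_{\ell\in T}\alpha_\ell\bA_\ell\ \text{ for some }T\subseteq[i]\text{ with }|T|\le j\text{ and all }\alpha_\ell\in\{-M,\ldots,M\}.
\]
In words, the $0$--$1$ entries of $S$ record exactly which vectors are reachable as $\{-M,\ldots,M\}$-coefficient combinations of at most $j$ of the first $i$ columns of $\bA$.

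For the base cases ($i=0$ or $j=0$) the only reachable vector is $0^d$, which matches the initialization. For the inductive step at $(i,j)$, the two branches of the transition correspond to the two ways a witnessing set $T\subseteq[i]$ can interact with column $i$: either $i\notin T$, in which case $\bv$ is already reachable from the first $i-1$ columns and $S[\bv][i-1][j]=1$ by induction (this is the ``inherit'' branch); or $i\in T$ with some coefficient $\alpha$, in which case $\bv-\alpha\bA_i$ is reachable as a $(\le j-1)$-combination of the first $i-1$ columns, i.e.\ $S[\bv-\alpha\bA_i][i-1][j-1]=1$ by induction, which is precisely the ``there exists $\alpha\in\{-M,\ldots,M\}$'' test. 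Checking both directions (soundness: every $1$-entry is produced by one of these two legitimate moves; completeness: every witnessing combination is discovered by one of them) closes the induction. I would also note that the ``at most $j$'' versus ``exactly $j$'' distinction is harmless, since $\alpha=0$ is an allowed coefficient, so the ``use column $i$'' branch subsumes the case of padding a shorter combination.

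Given the invariant, the statement follows quickly. Suppose $\bA_i$ is a linear combination of $k-1$ of the columns $\bA_1,\ldots,\bA_{i-1}$ with coefficients in $\{-M,\ldots,M\}$; the witnessing index set lies in $[i-1]$ and has size $k-1$, so the invariant (applied with $T\subseteq[i-1]$, $|T|\le k-1$) yields $S[\bA_i][i-1][k-1]=1$. When the outer loop of \algref{alg:kruskal:dyn:prog} reaches index $i$, it evaluates exactly this entry — the level-$(i-1)$ slice of $S$ having been completed in the previous iteration — the test succeeds, and the algorithm reports that $\bA$ has Kruskal rank less than $k$. If the algorithm instead halts at some earlier outer index $i'<i$, it can do so only through the same in-loop statement, which again reports Kruskal rank less than $k$; in either case the returned answer is ``less than $k$''.

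The main obstacle is pinning down the invariant precisely and carrying out both directions of the induction bookkeeping cleanly: reconciling the ``inherit $S[\bv][i-1][j]$'' branch with the ``use column $i$ with coefficient $0$'' branch so the ``at most $j$'' formulation stays consistent, reading the base-case initialization so that $S[0^d][\cdot][\cdot]=1$ on the relevant cells, and making sure the index domain of $S$ is taken large enough that every lookup $S[\bv-\alpha\bA_i][i-1][j-1]$ is well defined. Once the invariant is established, the deduction to the lemma is immediate.
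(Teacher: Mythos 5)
Your proposal is correct and follows essentially the same route as the paper: an induction over the dynamic-programming recurrence showing that the table entry corresponding to the witnessing combination is set to $1$, so the in-loop test $S[\bA_i][i-1][k-1]=1$ fires and the algorithm reports Kruskal rank less than $k$. The only difference is one of thoroughness rather than of method: you establish the full two-directional reachability invariant for every entry of $S$ (and explicitly handle the ``at most $j$'' padding via $\alpha=0$ and the timing of the check at level $i-1$), whereas the paper only chases the single chain $S[\alpha_1\bA_{c_1}][c_1][1]=1,\ldots,S[\alpha_1\bA_{c_1}+\cdots+\alpha_\ell\bA_{c_\ell}][c_\ell][\ell]=1$ by induction along the witness.
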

\begin{proof}
Let $i\in[n]$ be fixed so that $\bA_i$ is a linear combination of $\ell\le k-1$ previous columns of $\bA$ with coefficients in $\{-M,\ldots,M\}^{\ell-1}$. 
In particular, suppose
\[\bA_i=\alpha_1\bA_{c_1}+\ldots+\alpha_{\ell}\bA_{c_{\ell}},\]
where $c_1<\ldots<c_{\ell}$. 
We first observe that by initialization, $S[0^d][i][j]=1$ for all $i\in[n]$, $j\in[k]$. 
Thus we have $S[\alpha_1\bA_{c_1}][c_1][1]=1$ since $S[0^d][c_1-1][0]=1$. 
By induction, we have 
\[S[\alpha_1\bA_{c_1}+\ldots+\alpha_\ell\bA_{c_{\ell}}][c_\ell][\ell]=1\] and thus $S[\bA_i][i][\ell]=1$, so that \algref{alg:kruskal:dyn:prog} will report that $\bA$ has Kruskal rank less than $k$. 
\end{proof}

\begin{lemma}
Suppose $\bA\in\mathbb{R}^{d\times n}$ has Kruskal rank at least $k$. 
Then \algref{alg:kruskal:dyn:prog} will report that $\bA$ has Kruskal rank at least $k$. 
\end{lemma}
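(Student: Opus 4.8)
The plan is to prove the contrapositive by establishing a \emph{soundness invariant} for the dynamic programming table: at every point during \algref{alg:kruskal:dyn:prog}, if $S[\bv][i][j]=1$ then $\bv$ can be written as a linear combination of at most $j$ of the first $i$ columns $\bA_1,\ldots,\bA_i$ using coefficients in $\{-M,\ldots,M\}$. Granting this invariant, the lemma is immediate. The algorithm issues the early report ``Kruskal rank less than $k$'' only when $S[\bA_i][i-1][k-1]=1$ for some $i\in[n]$ (reading the pseudocode's $\bA_j$ in the test as the current column $\bA_i$), and by the invariant this means $\bA_i$ is a linear combination of at most $k-1$ strictly earlier columns, so those columns together with $\bA_i$ form a linearly dependent set of size at most $k$ — contradicting the hypothesis that $\bA$ has Kruskal rank at least $k$. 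Hence the early report never fires, the loop runs to completion, and the final line outputs ``Kruskal rank at least $k$''.

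It remains to prove the invariant, which I would do by induction on the outer loop index $i$. For the base case $i=0$ (equivalently, before the loop body first executes), the only entries flagged $1$ are the trivial ones set at initialization: $\bv=0^d$ is realized by the empty combination, and the sentinel entries $S[0^d][\cdot][\cdot]$ are realized by the all-zero combination, both of which are legitimate bounded-coefficient combinations. For the inductive step, assume the invariant holds for all entries with first index at most $i-1$; these are never overwritten once the loop reaches index $i$, and while filling level $i$ the algorithm reads only entries of the form $S[\cdot][i-1][\cdot]$, so there is no circular dependence. An entry $S[\bv][i][j]$ becomes $1$ in exactly two ways: (i) because $S[\bv][i-1][j]=1$, in which case by induction $\bv$ is a combination of at most $j$ of the first $i-1$ (hence first $i$) columns with bounded coefficients; or (ii) because there is $\alpha\in\{-M,\ldots,M\}$ with $S[\bv-\alpha\bA_i][i-1][j-1]=1$, in which case by induction $\bv-\alpha\bA_i$ is a combination of at most $j-1$ of the first $i-1$ columns with bounded coefficients, so adding the term $\alpha\bA_i$ exhibits $\bv$ as a combination of at most $j$ of the first $i$ columns with coefficients in $\{-M,\ldots,M\}$. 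This closes the induction.

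I expect the main obstacle to be bookkeeping rather than conceptual: one must (a) resolve the indexing ambiguities in the pseudocode — in particular that the report line tests the current column $\bA_i$, that the initialization should be read as $S[0^d][0][0]=1$ with all other entries $0$ so that ``$=1$'' is the meaningful flag, and that the triple loop fills level $i$ in an order that reads only level $i-1$ — and (b) check that permitting $\alpha=0$ in branch (ii) is harmless, since a zero coefficient still yields a valid (shorter) combination and is in any case subsumed by branch (i). Once these points are pinned down the induction above goes through verbatim, and no quantitative estimates are needed here: for this direction we only use that a bounded-coefficient combination is in particular a linear combination, so Kruskal rank at least $k$ rules out every dependency witness the table could ever record.
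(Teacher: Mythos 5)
Your proposal is correct and follows essentially the same route as the paper's proof: Kruskal rank at least $k$ rules out any bounded-coefficient dependency among at most $k$ columns, so the test $S[\bA_i][i-1][k-1]=1$ never fires and the algorithm reaches the final report. The only difference is that you explicitly prove the soundness invariant of the table by induction (and sort out the pseudocode's initialization and $i$/$j$ indexing), whereas the paper asserts the step ``thus $S[\bA_i][i-1][k-1]=0$'' without justification; your version is the more complete one.
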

\begin{proof}
Since $\bA$ has Kruskal rank at least $k$, then for all $i\in[n]$ and coefficients $[-M,\ldots,M]^{\ell}$ and indices $1\le i_1<\ldots<i_ell\le d$ with $\ell<k$, we have that 
\[\bA_i\neq\alpha_1\bA_{i_1}+\ldots+\alpha_\ell\bA_{i_\ell}.\]
Thus we have $S[\bA_i][i-1][k-1]=0$ for all $i\in[n]$, so that \algref{alg:kruskal:dyn:prog} will report that $\bA$ has Kruskal rank at least $k$.
\end{proof}

\begin{lemma}
The runtime of \algref{alg:kruskal:dyn:prog} is $nkM^{\O{d}}$.
\end{lemma}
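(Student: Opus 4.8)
The plan is to bound the running time by accounting separately for the size of the dynamic-programming table $S$, the cost of initialising it, and the cost of processing each cell in the main loop. First I would note that $S$ is indexed by a triple $(\bv,i,j)$ where $\bv$ ranges over a grid of $M^{\O{d}}$ possible vectors, $i$ ranges over $[n]$, and $j$ ranges over $[k]$; hence $S$ has at most $nk\cdot M^{\O{d}}$ cells, and the two initialisation lines touch at most $M^{\O{d}}+nk$ of them.

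Next I would analyse the body of the triple loop over $i\in[n]$, $j\in[k]$, and $\bv$ in the grid, which runs $nk\cdot M^{\O{d}}$ times. In a single iteration, the ``already exists'' update $S[\bv][i-1][j]$ costs $\O{d}$ to read and write a $d$-dimensional index, and the dynamic-programming step requires scanning over the $\O{M}$ candidate coefficients $\alpha\in\{-M,\ldots,M\}$; for each $\alpha$ we form $\bv-\alpha\bA_j$ in $\O{d}$ time and do one table lookup (treating out-of-grid vectors as absent, so every lookup is well defined), giving $\O{Md}$ for the step. The per-$i$ check at the top of the loop adds only $\O{d}$ more. Multiplying through, the total is $nk\cdot M^{\O{d}}\cdot\O{Md}$, and since $Md\le M^{\O{d}}$ for $M\ge 2$, this simplifies to $nk\cdot M^{\O{d}}$, as claimed.

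I do not expect a genuine obstacle here; the only thing that needs care is ensuring the $\O{Md}$-per-cell overhead — the linear scan over coefficients together with the $\O{d}$ cost of addressing a $d$-dimensional index — is legitimately absorbed into the $M^{\O{d}}$ factor. This follows because $M\ge 2$ already forces $d\le M^{d}$, so both the factor of $d$ and the extra factor of $M$ are swallowed by bumping the constant in the exponent. A secondary bookkeeping point is that entries of $\bv-\alpha\bA_j$ lying outside the grid should simply be regarded as holding value $0$ (equivalently, absent), which keeps every lookup $\O{d}$-time and does not change the count.
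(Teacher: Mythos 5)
Your proposal is correct and follows essentially the same argument as the paper: count the $nk\cdot(2M+1)^d$ table cells and multiply by the $\O{M}$ coefficient candidates checked per cell, absorbing everything into $nkM^{\O{d}}$. The only difference is that you also track the $\O{d}$ cost of vector arithmetic per lookup, which the paper elides but which, as you note, is harmlessly swallowed by the exponent.
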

\begin{proof}
Observe that for each $\bv\in\{-M,\ldots,M\}^d$, $i\in[n]$ and $j\in[k]$, \algref{alg:kruskal:dyn:prog} checks $\O{M}$ positions of $S[\bv-\bu][i-1][j-1]$. 
Thus the total runtime is at most $\O{nkM(2M+1)^d}=nkM^{\O{d}}$.
\end{proof}

\section{Discussion}
A key motivation for our work is to provide a versatile and unified method for verifying the Kruskal rank across different algebraic settings, thereby enabling the reliable identification of latent variable models and robust estimation of noise transition matrices in deep learning.

The FFT-based approach of Bhattacharyya et al.~\cite{bhattacharyya2018} offers a deterministic solution with low asymptotic complexity when the matrix structure allows fast convolution; however, its applicability is limited to structured settings. In contrast, our randomized hashing method, augmented by dynamic programming, delivers high-probability guarantees in a broader range of scenarios, including unstructured matrices over binary, finite, and integer domains.

Fine-grained complexity studies, such as those by Gupte and Lu~\cite{gupte2020}, indicate that sparse linear regression is inherently hard in the worst case, with runtime requirements exponential in the sparsity parameter \(k\). Our algorithms meet these lower bounds up to polynomial factors, making them both theoretically sound and practically competitive.

Moreover, the identifiability of noise transition matrices is essential for deep learning applications dealing with noisy labels. Work by Liu et al.~\cite{liu2023} uses Kruskal’s identifiability conditions to ensure that label noise can be correctly modeled and inverted. Although we do not present empirical experiments, our theoretical framework implies that our verification method can serve as a valuable diagnostic tool in practical scenarios.

\subsection*{Scope and Extensions}
Our analysis is focused on matrices over binary fields, general finite fields, and the integers. It is important to note that any matrix with rational entries can be scaled (by clearing denominators) to yield an integer matrix, thereby reducing the problem to the case already analyzed. In a similar vein, if the matrix contains irrational entries, these can be approximated arbitrarily closely by rational numbers within any fixed precision. While such approximations introduce a controlled numerical error, they do not alter the asymptotic runtime or correctness guarantees of our algorithms. Consequently, our methods apply to any matrix that can be represented with finite-precision arithmetic, which covers virtually all practical applications.

Furthermore, we remark that the theoretical guarantees presented in this paper remain valid under these transformations, provided that the scaling factor and the approximation precision are chosen appropriately. This observation ensures that our analysis encompasses a broad class of matrices beyond the specific cases explicitly treated in the paper.

It is worth noting that in practice, the choice of scaling factors and the precision of rational approximations must be made carefully to control numerical stability. Standard techniques in numerical linear algebra can be employed to assess and mitigate any propagation of rounding errors.

\section{Conclusion}
We have introduced efficient algorithms for verifying the Kruskal rank of matrices, providing a unified approach applicable to binary fields, finite fields, and integer matrices. Our methods are supported by rigorous theoretical analysis, achieving high-probability correctness and tight runtime bounds. These contributions have significant implications for tensor decomposition, sparse linear regression, and deep learning applications, particularly in the estimation of noise transition matrices. Future research will focus on empirical evaluations, further optimizations, and extending the framework to more complex latent variable models.

\section*{Acknowledgments}
We are deeply grateful to Prof David P.\ Woodruff from CMU and Prof Samson Zhou from Texas A\&M for their invaluable guidance and mentorship throughout the course of this research. Their insights into randomized linear algebra and algorithmic techniques have significantly shaped the development of this work.

\bibliographystyle{splncs04}
\bibliography{references}

\begin{thebibliography}{10}
\providecommand{\url}[1]{\texttt{#1}}
\providecommand{\urlprefix}{URL }
\providecommand{\doi}[1]{https://doi.org/#1}

\bibitem{allman2009}
Allman, E.S., Matias, C., Rhodes, J.A.: Identifiability of parameters in latent structure models with many observed variables. The Annals of Statistics  \textbf{37}(6A),  3099--3132 (2009)

\bibitem{baraniukEtAl2010}
Baraniuk, R.G., Cevher, V., Duarte, M.F., Hegde, C.: Model-based compressive sensing. IEEE Transactions on Information Theory  \textbf{56}(4),  1982--2001 (2010)

\bibitem{bhattacharyya2018}
Bhattacharyya, A., Indyk, P., Woodruff, D.P., Xie, N.: The complexity of linear dependence problems in vector spaces. Tech. rep., MIT CSAIL, IBM Almaden (2018), technical Report

\bibitem{candes2006}
Cand{\`e}s, E.J., Tao, T.: Near-optimal signal recovery from random projections: Universal encoding strategies? IEEE Transactions on Information Theory  \textbf{52}(12),  5406--5425 (2006)

\bibitem{comonlim2011}
Comon, P., Lim, L.H.: {SPARSE REPRESENTATIONS AND LOW-RANK TENSOR APPROXIMATION}. Rapport de recherche ISRN I3S/RR--2011--02--FR (2011), february 2011

\bibitem{donoho2006}
Donoho, D.L.: Compressed sensing. IEEE Transactions on Information Theory  \textbf{52}(4),  1289--1306 (2006)

\bibitem{gupte2020}
Gupte, A.A., Lu, K.: Fine-grained complexity of sparse linear regression (2020), sPUR+ Final Paper, Summer 2020

\bibitem{kruskal1976}
Kruskal, J.B.: Three-way arrays: Rank and uniqueness of trilinear decompositions, with application to arithmetic complexity and statistics. Linear Algebra and its Applications  \textbf{18},  95--138 (1976)

\bibitem{kruskal1977}
Kruskal, J.B.: Three-way arrays: Rank and uniqueness of trilinear decompositions, with application to arithmetic complexity and statistics (extended version). Unpublished manuscript  (1977)

\bibitem{liu2023}
Liu, Y., Cheng, H., Zhang, K.: Identifiability of label noise transition matrix. Proceedings of the 40th International Conference on Machine Learning  (2023)

\bibitem{patrini2017}
Patrini, G., Rozza, A., Menon, A.K., Nock, R., Qu, L.: Making deep neural networks robust to label noise: A loss correction approach. In: Proceedings of the IEEE Conference on Computer Vision and Pattern Recognition. pp. 1942--1949 (2017)

\end{thebibliography}

\end{document}